\title{Parameterized Critical Node Cut Revisited}
\titlerunning{Parameterized Critical Node Cut Revisited}
\author{Du\v{s}an Knop}
{Department of Theoretical Computer Science, Faculty of Information Technology, Czech Technical University in Prague, Czech Republic
\and \url{https://knopdusa.pages.fit/}}
{dusan.knop@fit.cvut.cz}
{https://orcid.org/0000-0003-2588-5709}{}
\author{Nikolaos Melissinos}
{Computer Science Institute, Faculty of Mathematics and Physics, Charles University, Prague, Czech Republic}
{melissinos@iuuk.mff.cuni.cz}
{https://orcid.org/0000-0002-0864-9803}
{Partially supported by Charles University projects UNCE 24/SCI/008 and PRIMUS 24/SCI/012, and by the project 25-17221S of GAČR.}
\author{Manolis Vasilakis}
{Universit\'{e} Paris-Dauphine, PSL University, CNRS UMR7243, LAMSADE, Paris, France}
{emmanouil.vasilakis@dauphine.eu}
{https://orcid.org/0000-0001-6505-2977}
{Supported by the ANR project ANR-21-CE48-0022 (S-EX-AP-PE-AL) and
the Barrande Fellowship programme.}
\authorrunning{D. Knop, N. Melissinos, and M. Vasilakis}
\keywords{Critical Node Cut, Parameterized Complexity, Treewidth}
\begin{document}

\maketitle

\begin{abstract}
We study how to sparsify connectivity in graphs under a tight deletion budget.
Given a graph $G$ and integers $k,x \ge 0$,
\textsc{Critical Node Cut} (CNC) asks whether we can delete at most $k$ vertices so that
the number of remaining unordered pairs of connected vertices is at most $x$.
CNC generalizes \textsc{Vertex Cover} (the case $x=0$) and models tasks in
network design, epidemiology, and social network analysis.
We comprehensively map the structural parameterized complexity landscape for \textsc{Critical Node Cut}.
First, we prove W[1]-hardness for the combined parameter
$k + \mathrm{fes} + \Delta + \mathrm{pw}$,
where $\mathrm{fes}$ is the feedback edge set number,
$\Delta$ the maximum degree,
and $\mathrm{pw}$ the pathwidth of the input graph respectively.
This significantly improves over the known W[1]-hardness for $k+\mathrm{tw}$, where $\mathrm{tw}$ denotes the treewidth,
and is tight in that tree-depth together with maximum degree trivially yields FPT.
Second, we give new positive results.
Specifically, we identify three structural parameters--max-leaf number, vertex integrity, and modular-width--that
render the problem fixed-parameter tractable,
and develop a polynomial-time algorithm for graphs of constant clique-width.
Third, leveraging a technique introduced by Lampis~[ICALP '14], we develop an FPT approximation scheme that,
for any $\varepsilon > 0$,
computes a $(1+\varepsilon)$-approximate solution in time
$(\mathrm{tw} / \varepsilon)^{\mathcal{O}(\mathrm{tw})} n^{\mathcal{O}(1)}$,
where $\mathrm{tw}$ denotes the treewidth of the input graph.
Finally, we show that CNC admits no polynomial kernel when
parameterized by vertex cover number, unless standard assumptions fail.
Together, these results substantially sharpen the known complexity landscape for
CNC.

\end{abstract}


\section{Introduction}

Let $G$ be a simple, undirected, and unweighted graph, and let $k$ and $x$ be two non-negative integers.
The {\CNC} problem ({\CNCshort} for short) asks whether we can delete at most
$k$ vertices from $G$ so that the remaining graph contains at most $x$ connected vertex pairs.
This problem arises in various applications, such as budgeted immunization for contagion control,
analyzing brain connectivity, and uncovering structures in social graphs (see~\cite{cor/ArulselvanCEP09, tcs/HermelinKKN16} and references therein).

Importantly, {\CNCshort} generalizes the well-known \textsc{Vertex Cover} problem, which corresponds to the special case $x=0$.
Given its NP-hardness, we study {\CNCshort} through the lens of parameterized complexity.%
\footnote{We assume familiarity with the basics of parameterized complexity, as given e.g.~in~\cite{books/CyganFKLMPPS15}.}
Previous work has shown that the problem is fixed-parameter tractable (FPT) when parameterized by the vertex cover number
of the input graph~\cite{jgaa/SchestagGKS24}, but W[1]-hard when parameterized by the natural parameter $k$
or by the treewidth $\tw$~\cite{tcs/HermelinKKN16}.%
\footnote{More strongly, the reduction in~\cite{tcs/HermelinKKN16} implies W[1]-hardness with respect to tree-depth.}
As a matter of fact, Agrawal, Lokshtanov, and Mouawad have shown that this hardness persists even when parameterizing
by the combined parameter $k + \tw$~\cite{wg/AgrawalLM17}.

\subparagraph{Our Contribution.}
Our first result, which constitutes our main technical contribution,
is to significantly improve over the aforementioned hardness result of Agrawal, Lokshtanov, and Mouawad~\cite{wg/AgrawalLM17}.
In particular, in \cref{thm:CNC:fes} we prove that {\CNCshort} remains W[1]-hard even when parameterized by the combined parameter $k+\fes+\Delta+\pw$,
where $\fes$, $\Delta$, and $\pw$ denote the feedback edge set number, the maximum degree, and the pathwidth of the input graph respectively.
Notice that this also implies that the problem is W[1]-hard when parameterized by $k+\ctw$, where $\ctw$ denotes the cutwidth of the input graph,
as for any graph it holds that $\ctw \le \pw \cdot \Delta$.
In fact, this intractability result is tight in the sense that tree-depth together with maximum degree
trivially yields FPT, since this parameterization bounds the graph size.%
\footnote{A graph of tree-depth $\td$ and maximum degree $\Delta$ has at most $ \td \cdot \Delta^\td$ vertices.}
To obtain our result, we adapt a recent reduction for the related \textsc{Vertex Integrity} problem by Hanaka, Lampis, Vasilakis, and Yoshiwatari~\cite{mfcs/HanakaLVY24}.

Moving on, we obtain several results concerning the tractability of the problem under different structural parameterizations.
We identify three structural parameters that render {\CNCshort} fixed-parameter tractable,
thereby contrasting the hardness result above:
max-leaf number (\cref{thm:CNC:ml}),
vertex integrity (\cref{thm:CNC:vi}),
and modular-width (\cref{thm:CNC:mw}).
Our techniques towards obtaining those results include reductions to acyclic instances~\cite{mfcs/HanakaLVY24},
$N$-fold Integer Programming~\cite{KouteckyLO18},
and dynamic programming on suitable decompositions.
We also consider the parameterization by clique-width $\cw$, for which the problem is known to be W[1]-hard,
and obtain an algorithm with running time $n^{\bO(2^{\cw})}$ (\cref{thm:cw}) thus placing it in XP.

We then revisit the parameterization by treewidth and identify the bottleneck in standard dynamic programming as the need
to track large numerical values. To address this, we apply a rounding technique introduced by Lampis~\cite{icalp/Lampis14},
obtaining an FPT approximation scheme.
Specifically, in \cref{thm:fpt-as} we give an algorithm that, for any $\varepsilon > 0$,
computes in $(\tw / \varepsilon)^{\bO(\tw)} \cdot n^{\bO(1)}$ time a
deletion set of size at most $k$ such that the number of connected pairs in the resulting graph is at most
$(1+\varepsilon)$ times the minimum possible.

Finally, we investigate the compressibility of {\CNCshort}. We prove that, unless standard complexity assumptions fail,
the problem does not admit a polynomial kernel when parameterized by the vertex cover number of the input graph
(\cref{thm:kernel}).
This negative result highlights that even very restrictive structural parameterizations do not yield
efficient preprocessing algorithms (under standard assumptions).

\subparagraph{Related Work.}
{\CNCshort} is a well-studied problem; see~\cite[Section~5.2]{csr/LalouTK18} and the references therein.
Approximation algorithms have been proposed~\cite{cor/VentrescaA14},
while Hermelin, Kaspi, Komusiewicz, and Navon~\cite{tcs/HermelinKKN16} initiated the problem's study under parameterized complexity,
presenting a plethora of both positive and negative results with respect to natural and structural parameterizations.
Among others, they prove a (tight) $n^{o(k)}$ lower bound under the ETH, W[1]-hardness by tree-depth,
and that the problem is FPT parameterized by $x + \tw$.
Polynomial-time algorithms are known for trees~\cite{cor/SummaGL11},%
\footnote{That is the case even for weighted variants of the problem.}
and more generally,
the problem admits a DP algorithm of running time $n^{\bO(\tw)}$~\cite{dam/AddisSG13}.
As previously mentioned, {\CNCshort} is known to be W[1]-hard parameterized by $k+\tw$~\cite{wg/AgrawalLM17}.
On the other hand, it admits FPT algorithms with single-exponential parametric dependence when parameterized by vertex cover
number~\cite{jgaa/SchestagGKS24} or neighborhood diversity~\cite{mthesis/Schestag21}.

\section{Preliminaries}\label{sec:preliminaries}
Throughout the paper we use standard graph notations~\cite{Diestel17}
and assume familiarity with the basic notions of parameterized complexity~\cite{books/CyganFKLMPPS15}.
Proofs of statements marked with {\appsymbNote} are deferred to the appendix.

We use $\mathbb{N}$ to denote the set of non-negative integers.
For $x, y \in \mathbb{Z}$, let $[x, y] = \setdef{z \in \mathbb{Z}}{x \leq z \leq y}$,
while $[x] = [1,x]$.
For a set of integers $S \subseteq \mathbb{N}$
let $\binom{S}{c}$ denote the set of subsets of $S$ of size~$c$,
i.e., $\binom{S}{c} = \{S' \subseteq S : |S'| = c\}$.
For a (multi)set of positive integers $S$,
let $\Sigma(S)$ denote the sum of its elements.

All graphs considered are undirected without loops.
Given a graph $G$, $\cc(G)$ denotes the set of its connected components, while $\pairs(G)$ denotes the number of pairs of connected vertices in $G$,
that is, $\pairs(G) = \sum_{C \in \cc(G)} \binom{|V(C)|}{2}$.
For a vertex $v \in V(G)$, we denote the neighborhood of $v$ in $G$ by $N_{G}(v)$.
For a subset of vertices $S \subseteq V(G)$, $G[S]$ denotes the subgraph induced by $S$, while $G - S$ denotes $G[V(G) \setminus S]$.

For the basics on tree decompositions we refer to~\cite[Section~7]{books/CyganFKLMPPS15}.
We define the \emph{height of a node of a tree decomposition} as the distance from the root of the tree decomposition to said node.
In that case, the \emph{height of a tree decomposition} is defined as the maximum distance from the root to any of its nodes.

A graph of clique-width $k$ can be constructed through a sequence of the following operations on vertices that are labeled with at most $k$ different labels.
We can use (1) introducing a single vertex~$v$ of an arbitrary label~$i$, denoted $i(v)$,
(2) disjoint union of two labeled graphs, denoted $H_1 \oplus H_2$,
(3) introducing edges between \emph{all} pairs of vertices of two distinct labels~$i$ and~$j$ in a labeled graph~$H$, denoted $\eta_{i,j}(H)$,
and (4) changing the label of \emph{all} vertices of a given label~$i$ in a labeled graph~$H$ to a different label~$j$,
denoted $\rho_{i \to j}(H)$.
An expression describes a graph~$G$ if $G$ is the final graph given by the expression (after we remove all the labels).
The \emph{width} of an expression is the number of different labels it uses.
The clique-width of a graph is the minimum width of an expression describing it~\cite{dam/CourcelleO00}.
For a labeled graph $H$ and $v \in V(H)$, let $\lab_H(v)$ denote the label of $v$ in $H$,
while $\lab^{-1}_H (i) = \setdef{v \in V(H)}{\lab_H(v) = i}$ denotes the set of vertices of $H$ of label $i$.
A clique-width expression is \emph{irredundant} if whenever the operation $\eta_{i,j}$
is applied on a graph $G$, there is no edge between an $i$-vertex and a $j$-vertex in $G$,
and we will use such expressions to simplify our algorithms.
We remark that any clique-width expression can be transformed in linear time into an irredundant one of the same width~\cite{dam/CourcelleO00}.





\problemdef{\CNC}
{A graph $G = (V,E)$ as well as integers $k$ and $x$.}
{Determine whether there exists a set $S \subseteq V$ with $|S| \le k$
such that in $G-S$ there are at most $x$ pairs of connected vertices,
that is, whether $\pairs(G-S) \le x$.}

\begin{lemmarep}[\appsymb]\label{lemma:minimize_functions}
    Let $k \in \mathbb{N}$.
    Assume we are given $n$ functions $f_i \colon [0,k] \to \mathbb{N}$, where $i \in [n]$,
    and define $f \colon [0,k] \to \mathbb{N}$ such that for all $k' \in [0,k]$,
    \[
        f(k') = \min_{\genfrac{}{}{0pt}{}{k_i \in [0,k']}{k_1 + \ldots + k_n = k'}} \sum_{i \in [n]} f_i(k_i).
    \]
    One can compute the function $f$ in time $\bO(nk^2)$.
\end{lemmarep}

\begin{proof}
    We will perform dynamic programming.
    Consider the table $T[a][b]$, where $a \in [n]$ and $b \in [0,k]$.
    On a high-level, $T[a][b]$ contains the minimum sum over the first $a$ functions by
    partitioning a total budget of $b$ over them,
    that is,
    \begin{equation}\label{eq:minimize_functions}
        T[a][b] = \min_{\genfrac{}{}{0pt}{}{k_i \in [0,b]}{k_1 + \ldots + k_a = b}} \sum_{i \in [a]} f_i(k_i).
    \end{equation}
    In that case it suffices to compute the values of the table for $a=n$.

    We set $T[a][0] = \sum_{i \in [a]} f_i(0)$ for all $a \in [n]$,
    and $T[1][b] = f_1(b)$ for all $b \in [0,k]$.
    Then, for $a \in [2,n]$ and $b \in [k]$, let
    \[
        T[a][b] = \min_{ b' \in [0,b]} \left\{ T[a-1][b-b'] + f_a(b') \right\}.
    \]
    Notice that the DP table has a total of $n(k+1)$ cells,
    each of which requires $\bO(k)$ time to be filled, assuming that the evaluation of a function $f_i$ requires $\bO(1)$ time.
    Consequently, the total running time is $\bO(nk^2)$.
    As for the correctness of \Cref{eq:minimize_functions},
    one can easily argue about it by induction.
\end{proof}

\section{W[1]-hardness}\label{sec:w1_hardness}

In this section we prove that {\CNC} is W[1]-hard when parameterized by
the combined parameter $k+\fes+\Delta+\pw$, where $\fes$, $\Delta$, and $\pw$ denote the
feedback edge number, the maximum degree, and the pathwidth of the input graph, respectively.
This significantly improves over the previous result of Agrawal, Lokshtanov, and Mouawad~\cite{wg/AgrawalLM17}
who showed that the problem is W[1]-hard parameterized by $k$ plus the treewidth of the input graph.
Our reduction follows along the lines of a recent reduction by Hanaka et al.~\cite{mfcs/HanakaLVY24}
who showed the W[1]-hardness of {\SWCOC} under a similar parameterization.
We reduce from the {\RUBP} problem,
which we formally define below and which is known to be W[1]-hard parameterized by the number of bins $k$~\cite{mfcs/HanakaLVY24}.

\problemdef{\RUBP}
{A multiset $A = \braces{a_1, \ldots, a_n}$ of integers in unary, $k \in \mathbb{N}$, and a function $f \colon A \to \binom{[k]}{2}$.}
{Determine whether there is a partition of $A$ into multisets $\mathcal{A}_1, \ldots, \mathcal{A}_k$,
such that for all $i \in [k]$ it holds that
(i) $\Sigma(\mathcal{A}_i) = \Sigma(A) / k$,
and (ii) $\forall a \in \mathcal{A}_i$, $i \in f(a)$.}

We first present a sketch of our reduction.
The \emph{weight} of a vertex refers to the length of a path attached to it; this is later formalized in the construction.
Similarly to the construction of Hanaka et al.~\cite{mfcs/HanakaLVY24},
for every bin of the {\RUBP} problem instance we introduce a clique of $\bO(k^2)$ heavy vertices,
and then connect any pair of such cliques via two paths.
The weights are set in such a way that an optimal solution will delete only vertices from the connection paths.
In order to construct a path for a pair of bins, we compute the set of all subset sums of the items that can be placed in these two
bins, and introduce a vertex of medium weight per such subset sum.
This step can be completed in polynomial time, as all items are in unary.
Moreover, every such vertex corresponding to subset sum $s$ is preceded by exactly $s$ vertices of weight $1$.
An optimal solution will cut the path so that the number of vertices of weight $1$ will be partitioned between the two bins,
encoding the subset sum of the elements placed in each bin.
The second path that we introduce has balancing purposes, allowing us to exactly count the number of vertices of medium weight that every
connected component will end up with.
In the end, the intended solution deletes only vertices of medium weight,
resulting in components of only two possible sizes.

\begin{theorem}\label{thm:CNC:fes}
    {\CNC} is W[1]-hard parameterized by $k + \fes + \Delta + \pw$.
\end{theorem}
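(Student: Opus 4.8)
The plan is a polynomial-time reduction from {\RUBP}, which is W[1]-hard parameterized by the number of bins $k$~\cite{mfcs/HanakaLVY24}, to {\CNC}, producing an instance in which the deletion budget together with $\fes$, $\Delta$, and $\pw$ is bounded by a fixed polynomial in $k$; since an FPT reduction may increase the parameter by an arbitrary computable function, this suffices. Let $(A,k,f)$ be the input, with $A$ in unary; we may assume $k\mid\Sigma(A)$ and put $B=\Sigma(A)/k$. For $\{i,j\}\in\binom{[k]}{2}$ set $A_{ij}=\setdef{a\in A}{f(a)=\{i,j\}}$ and let $0=\sigma^{ij}_0<\sigma^{ij}_1<\dots<\sigma^{ij}_{m_{ij}}$ be the \emph{distinct} subset sums of $A_{ij}$; because $A$ is unary there are at most $\Sigma(A)+1$ of them, and they are computable in polynomial time by the textbook subset-sum dynamic program. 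A vertex of \emph{weight} $w$ is one carrying a pendant path on $w-1$ fresh vertices. The graph $G$ consists of: (i) for each bin $i$, a clique $Q_i$ on $\Theta(k^2)$ vertices of a huge weight $H$, together with an extra pendant path attached to one vertex of $Q_i$ whose length is chosen so that $|Q_i|\,H+M_i\,w_{\mathrm{med}}$ plus that length equals a value $D$ independent of $i$, where $M_i=\sum_{j\ne i}m_{ij}$; and (ii) for each pair $\{i,j\}$, two internally disjoint $Q_i$--$Q_j$ paths --- a \emph{subset-sum path} that runs from the $Q_i$ end through medium-weight vertices $u^{ij}_0,\dots,u^{ij}_{m_{ij}}$ with $\sigma^{ij}_t-\sigma^{ij}_{t-1}$ vertices of weight $1$ between $u^{ij}_{t-1}$ and $u^{ij}_t$ (so $u^{ij}_t$ is preceded towards $Q_i$ by exactly $\sigma^{ij}_t$ weight-$1$ vertices), and a \emph{balancing path} through medium-weight vertices $v^{ij}_0,\dots,v^{ij}_{m_{ij}}$ with nothing in between. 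We set the budget $k_{\mathrm{CNC}}=2\binom{k}{2}$, fix $w_{\mathrm{med}}$ and $H$ as suitable polynomials in $n,k,\Sigma(A)$ with $\Sigma(A)<w_{\mathrm{med}}\ll H$, and let $x$ be the exact number of connected pairs of the canonical solution described next; as all quantities are polynomially bounded, the instance is computable in polynomial time.

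For completeness, given a valid partition $\mathcal{A}_1,\dots,\mathcal{A}_k$, the \emph{canonical} solution deletes, for each pair $\{i,j\}$, the vertex $u^{ij}_t$ with $\sigma^{ij}_t=\Sigma(\mathcal{A}_i\cap A_{ij})$ on the subset-sum path and the \emph{mirror} vertex $v^{ij}_{m_{ij}-t}$ on the balancing path. Deleting an internal vertex of a path separates its endpoints, so this disconnects $Q_i$ from $Q_j$; the $Q_i$ side keeps the $\sigma^{ij}_t$ weight-$1$ vertices and the $t$ medium vertices $u^{ij}_0,\dots,u^{ij}_{t-1}$ from the subset-sum path and the $m_{ij}-t$ medium vertices $v^{ij}_0,\dots,v^{ij}_{m_{ij}-t-1}$ from the balancing path, hence exactly $m_{ij}$ medium vertices from this pair, independently of $t$, and symmetrically for $Q_j$. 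Therefore $G$ minus the deletion set has exactly $k$ clique components, the $i$-th of size $D+\Sigma(\mathcal{A}_i)=D+B$, plus $k_{\mathrm{CNC}}$ pendant-path components (the pendants of the deleted medium vertices); its number of connected pairs is the target value $x=k\binom{D+B}{2}+k_{\mathrm{CNC}}\binom{w_{\mathrm{med}}-1}{2}$.

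For soundness --- the crux --- take any $S$ with $|S|\le k_{\mathrm{CNC}}$ and $\pairs(G-S)\le x$. Choosing $|Q_i|>k_{\mathrm{CNC}}$ and using that a clique stays connected under vertex deletions, each $Q_i$ lies in one component of $G-S$; separating $Q_i$ from $Q_j$ requires deleting an internal vertex of both of their connecting paths, so the cliques fall into at most $k$ components, and into fewer than $k$ only by merging a group of $t+1\ge 2$ cliques into one component. Combining this with the convexity bound $\sum_{C}\binom{|C|}{2}\ge c\binom{s/c}{2}$ for $c\le k$ clique components of total size $s$, and taking $H$ large enough, one checks that $\pairs(G-S)\le x$ forces $S$ to delete exactly one internal vertex of each of the $2\binom{k}{2}$ connecting paths and nothing else. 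A further accounting step (using $w_{\mathrm{med}}>\Sigma(A)$) rules out deleting a weight-$1$ vertex in place of a medium one, as well as any non-mirrored choice of the two cuts for a pair $\{i,j\}$: each such deviation makes the component sizes unequal in a way that, by the same convexity bound, pushes the pair count above $x$. Hence $S$ is canonical, and for each pair $\{i,j\}$ the number of weight-$1$ vertices it delivers to the $Q_i$ side is $\sigma^{ij}_t\in\{\sigma^{ij}_0,\dots,\sigma^{ij}_{m_{ij}}\}$, realized by some $\mathcal{A}_{ij}\subseteq A_{ij}$; equality of all component sizes gives $\sum_{j\ne i}\Sigma(\mathcal{A}_{ij})=B$ for every $i$, so $\mathcal{A}_i:=\bigcup_{j\ne i}\mathcal{A}_{ij}$ is a valid {\RUBP} partition (each $a\in A_{ij}$ goes only to bin $i$ or bin $j$, so $i\in f(a)$).

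It remains to bound the parameters. Every clique vertex has degree $\Theta(k^2)$ (its clique degree plus $\bO(1)$, since the $2(k-1)$ paths incident to $Q_i$ attach to distinct vertices) and every other vertex has degree at most $3$, so $\Delta=k^{\bO(1)}$. Deleting from each clique all edges outside a spanning tree and one edge from each double path leaves a forest, so $\fes=k^{\bO(1)}$. Keeping $\bigcup_i Q_i$ (of size $k^{\bO(1)}$) in every bag while sweeping the connecting paths and all pendant paths one vertex at a time gives a path decomposition of width $k^{\bO(1)}$. Thus $k_{\mathrm{CNC}}+\fes+\Delta+\pw=k^{\bO(1)}$ and the reduction is an FPT reduction. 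I expect the soundness step to be the main obstacle: choosing the polynomials $H$ and $w_{\mathrm{med}}$ so that the canonical configuration is provably the \emph{unique} way to meet the pair budget --- precisely the role of the balancing path, following the weight-tuning scheme of Hanaka et al.~\cite{mfcs/HanakaLVY24}.
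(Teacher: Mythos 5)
Your construction and completeness argument follow the paper's own route: the paper also reduces from {\RUBP}, builds one heavy clique per bin, joins each pair of cliques by a subset-sum path (unit-weight vertices interleaved with medium-weight vertices, one per subset sum) and a second, purely medium-weight balancing path, sets the budget to $2\binom{k}{2}$, and sets $x$ to the pair count of the canonical solution. Your variant of the balancing (per-pair medium count $m_{ij}$ on each side, equalized across bins by a padding pendant on each clique) is a legitimate alternative to the paper's choice of padding each balancing path so that every side receives exactly $2B$ medium vertices, and your extraction step via ``equal component sizes plus $w_{\mathrm{med}}>\Sigma(A)$'' is the right mechanism. (Your additional claim that any non-mirrored pair of cuts is excluded is neither needed nor clearly true, since surpluses and deficits of medium vertices can compensate across different pairs; what the argument actually uses is only the per-bin totals.)

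The genuine gap is the soundness step, which you yourself defer with ``taking $H$ large enough, one checks\dots'' and flag as the main obstacle. This is precisely where the bulk of the paper's proof lives: it fixes concrete weights ($c=6k^2$ clique vertices of weight $L/c$ with $M=kB+1$ and $L=36k^4BM$) and proves, via explicit exchange arguments and chains of inequalities, (i) that an optimal solution deletes at most one vertex from each connecting path, (ii) that every two cliques must end up in different components, hence exactly one deletion per path and none elsewhere, and (iii) a counting argument showing the pair bound is met only when there are exactly $2\binom{k}{2}$ components of size $M-1$ and the remaining vertices are split equally among the $k$ clique components. None of this is routine, and your construction even adds a case the paper deliberately avoids: you attach each connecting path to a \emph{single} clique vertex, so deleting that attachment vertex separates the pair without touching the path's interior (and simultaneously sheds a weight-$H$ pendant), which falsifies your assertion that separating $Q_i$ from $Q_j$ ``requires deleting an internal vertex of both of their connecting paths''; the paper instead joins each path endpoint to \emph{all} vertices of the clique precisely to rule such cuts out. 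Until the weight choices are fixed and these exchange/counting arguments are carried out (or the attachment scheme is changed), the reduction's backward direction is not established.
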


\begin{proof}
    Let $(A, k, f)$ be an instance of \RUBP,
    where $A = \braces{a_1, \ldots, a_n}$ denotes the multiset of items given in unary,
    $k$ is the number of bins, and
    $f \colon A \to \binom{[k]}{2}$ dictates the pair of bins an item may be placed into.
    In the following let
    $B = \Sigma(A) / k$,
    $c = 6k^2$,
    $M = kB+1$,
    $L = 36k^4 BM$,
    and $T = L + (k-1) \cdot 2B M + B$.
    Notice that $M > kB$ and $L / c = 6k^2BM$.
    We will reduce $(A,k,f)$ to an equivalent instance $(G,k',x)$ of \CNC,
    where $k' = 2 \binom{k}{2}$ and $x = k \cdot \binom{T}{2} + 2 \binom{k}{2} \cdot \binom{M-1}{2}$
    denote the size of the deletion set and the bound on the number of pairs of connected vertices, respectively.

    We proceed to describe the construction of graph $G$.
    For ease of presentation, when we say that a vertex $v \in V(G)$ has weight $w \ge 1$,
    it means that we introduce a path on $w-1$ vertices and connect one of its endpoints to $v$.

    For every $i \in [k]$, we introduce a clique on vertex set $\hat{C}_i$,
    which is comprised of $c$ vertices, each of weight $L / c$.
    Fix $i$ and $j$ such that $1 \leq i < j \leq k$, and let $H_{i,j} = \setdef{a \in A}{f(a) = \braces{i,j}}$ denote the
    multiset of all items of $A$ which can be placed either on bin $i$ or bin $j$, where $\Sigma(H_{i,j}) \leq 2B$.
    Let $\mathcal{S}(H_{i,j}) = \setdef{\Sigma(H)}{H \subseteq H_{i,j}}$ denote the set%
    \footnote{We stress that $\mathcal{S}(H_{i,j})$ is a set and not a multiset.}
    of all subset sums of $H_{i,j}$,
    and notice that since every element of $H_{i,j}$ is encoded in unary,
    $\mathcal{S}(H_{i,j})$ can be computed in polynomial time using,  e.g.,~Bellman's classical DP algorithm~\cite{Bellman}.
    We next construct two paths connecting the vertices of $\hat{C}_i$ and $\hat{C}_j$.
    First, we introduce the vertex set $\hat{U}_{i,j} = \setdef{v^{i,j}_q}{q \in [0, 4B - |\mathcal{S}(H_{i,j})| +1]}$,
    all the vertices of which are of weight $M$.
    Add edges $(v^{i,j}_q, v^{i,j}_{q+1})$ for all $q \in [0, 4B - |\mathcal{S}(H_{i,j})|]$,
    as well as $(v_1, v^{i,j}_0)$ and $(v^{i,j}_{4B - |\mathcal{S}(H_{i,j})| +1}, v_2)$,
    for all $v_1 \in \hat{C}_i$ and $v_2 \in \hat{C}_j$.
    Next, we introduce the vertex set
    $\hat{D}_{i,j} = \setdef{s^{i,j}_q}{q \in [\Sigma(H_{i,j})]} \cup \setdef{\sigma^{i,j}_q}{q \in \mathcal{S}(H_{i,j})}$,
    with the $s$-vertices being of weight $1$ and the $\sigma$-vertices of weight $M$.
    Then, add the following edges:
    \begin{itemize}
        \item $(v_1, \sigma^{i,j}_0)$ and $(\sigma^{i,j}_{\Sigma(H_{i,j})}, v_2)$, for all $v_1 \in \hat{C}_i$ and $v_2 \in \hat{C}_j$,

        \item for $q \in \mathcal{S}(H_{i,j})$,
        add the edges $(s^{i,j}_q, \sigma^{i,j}_q)$ if $q \neq 0$
        and $(\sigma^{i,j}_q, s^{i,j}_{q+1})$ if $q \neq \Sigma(H_{i,j})$, and

        \item for $q \in [\Sigma(H_{i,j})] \setminus \mathcal{S}(H_{i,j})$,
        add the edge $(s^{i,j}_q, s^{i,j}_{q+1})$.
    \end{itemize}

    This concludes the construction of $G$.
    See \Cref{fig:fes_construction} for an illustration.
    Notice that the number of vertices of $G$ excluding those belonging to the cliques $\hat{C}_1, \ldots, \hat{C}_k$ or to a path attached to them
    is $(4B+2)M \cdot \binom{k}{2} + kB < 6k^2BM = L/c$.

    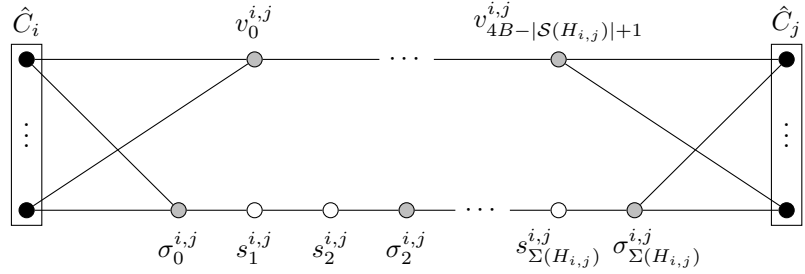
\begin{figure}[ht]
        \centering

        \begin{tikzpicture}[scale=1, transform shape]

        \node[black_vertex] (c11) at (5,5) {};
        \node[] () at (5,6.1) {$\vdots$};
        \node[black_vertex] (c1n) at (5,7) {};
        \draw[] (4.8,4.8) rectangle (5.2,7.2);
        \node[] () at (5,7.5) {$\hat{C}_i$};

        \begin{scope}[shift={(10,0)}]
            \node[black_vertex] (c21) at (5,5) {};
            \node[] () at (5,6.1) {$\vdots$};
            \node[black_vertex] (c2n) at (5,7) {};
            \draw[] (4.8,4.8) rectangle (5.2,7.2);
            \node[] () at (5,7.5) {$\hat{C}_j$};
        \end{scope}

        \begin{scope}[shift={(3,-3)}]
            \node[gray_vertex] (p0) at (5,10) {};
            \node[] () at (5,10.5) {$v^{i,j}_0$};
            \node[] (temp) at (7,10) {$\cdots$};
            \node[gray_vertex] (pB) at (9,10) {};
            \node[] () at (9,10.5) {$v^{i,j}_{4B - |\mathcal{S}(H_{i,j})| + 1}$};
            \draw[] (p0)--(temp)--(pB);
            \draw[] (p0)--(c11);
            \draw[] (p0)--(c1n);
            \draw[] (pB)--(c21);
            \draw[] (pB)--(c2n);
        \end{scope}

        \begin{scope}[shift={(2,0)}]
            \node[gray_vertex] (sigma0) at (5,5) {};
            \node[] () at (5,4.5) {$\sigma^{i,j}_0$};
            \node[vertex] (s1) at (6,5) {};
            \node[] () at (6,4.5) {$s^{i,j}_1$};
            \node[vertex] (s2) at (7,5) {};
            \node[] () at (7,4.5) {$s^{i,j}_2$};
            \node[gray_vertex] (sigma2) at (8,5) {};
            \node[] () at (8,4.5) {$\sigma^{i,j}_2$};
            \node[] (temp2) at (9,5) {$\cdots$};
            \node[vertex] (sB) at (10,5) {};
            \node[] () at (10,4.5) {$s^{i,j}_{\Sigma(H_{i,j})}$};
            \node[gray_vertex] (sigmaB) at (11,5) {};
            \node[] () at (11.3,4.5) {$\sigma^{i,j}_{\Sigma(H_{i,j})}$};

            \draw[] (sigma0)--(s1)--(s2)--(sigma2)--(temp2)--(sB)--(sigmaB);
            \draw[] (sigma0)--(c11);
            \draw[] (sigma0)--(c1n);
            \draw[] (sigmaB)--(c21);
            \draw[] (sigmaB)--(c2n);
        \end{scope}

        \end{tikzpicture}

        \caption{Rectangles denote cliques of size $c$.
        Here we assume that $1 \leq i < j \leq k$,
        $1 \notin \mathcal{S}(H_{i,j})$, and $2 \in \mathcal{S}(H_{i,j})$.
        The gray and black color indicates weight of $M$ and $L / c$, respectively.}
        \label{fig:fes_construction}
    \end{figure}

    \begin{lemmarep}[\appsymb]\label{lemma:rubp->cnc}
        If $(A,k,f)$ is a Yes-instance of \RUBP, then $(G,k',x)$ is a Yes-instance of \CNC.
    \end{lemmarep}

    \begin{proof}
        Let $(\mathcal{A}_1, \ldots, \mathcal{A}_k)$ be a partition of $A$ such that for all $i \in [k]$ it holds that
        (i) $\Sigma(\mathcal{A}_i) = B$, and (ii) $\forall a \in \mathcal{A}_i$, $i \in f(a)$.
        Fix $1 \leq i < j \leq k$, and let $\mathcal{H}_{i,j} \subseteq H_{i,j}$ such that $\mathcal{A}_i \cap H_{i,j} = \mathcal{H}_{i,j}$,
        while $\mathcal{A}_j \cap H_{i,j} = H_{i,j} \setminus \mathcal{H}_{i,j}$.
        Note that for all $i \in [k]$, it holds that
        \begin{equation}\label{eq:rubp->cnc}
            \sum_{j \in [i-1]} \big( \Sigma(H_{j,i}) - \Sigma(\mathcal{H}_{j,i}) \big) + \sum_{j \in [i+1,k]} \Sigma(\mathcal{H}_{i,j}) = B.
        \end{equation}
        Additionally, let $r_{i,j} = |\mathcal{S}(H_{i,j}) \cap [0,\Sigma(\mathcal{H}_{i,j})-1]|$ denote
        the number of distinct sums of value at most $\Sigma(\mathcal{H}_{i,j})-1$ obtained from subsets of $H_{i,j}$.

        We are now ready to construct the cut~$S$.
        Set $S = \setdef{\sigma^{i,j}_{\Sigma(\mathcal{H}_{i,j})}, v^{i,j}_{2B - r_{i,j}}}{1 \leq i < j \leq k}$.
        It holds that $|S| = 2 \binom{k}{2} = k'$.
        Additionally, it holds that $G-S$ has exactly $k + 2 \binom{k}{2}$ connected components:
        every clique $\hat{C}_i$ belongs to a distinct component,
        while all vertices of $S$ are of weight $M$,
        thus resulting in $|S|$ additional components of size $M-1$.
        To prove the statement, it suffices to show that the connected components of $G-S$ apart from the latter $|S|$ ones have size at most $T$.

        Let for all $i \in [k]$, $\mathcal{C}_i$ denote the connected component of $G-S$ that contains the vertices of $\hat{C}_i$.
        Fix $1 \leq i < j \leq k$.
        Since $S \cap (\hat{D}_{i,j} \cup \hat{U}_{i,j}) = \{ \sigma^{i,j}_{\Sigma(\mathcal{H}_{i,j})}, v^{i,j}_{2B - r_{i,j}} \}$,
        it holds that $\mathcal{C}_i$ contains $\Sigma(\mathcal{H}_{i,j})$ vertices of weight $1$ belonging to $\hat{D}_{i,j}$,
        as well as $r_{i,j} + (2B - r_{i,j}) = 2B$ vertices of weight $M$ from $\hat{D}_{i,j} \cup \hat{U}_{i,j}$.
        As for $\mathcal{C}_j$, it contains $\Sigma(H_{i,j}) - \Sigma(\mathcal{H}_{i,j})$ vertices of weight $1$ belonging to $\hat{D}_{i,j}$,
        as well as $(|\mathcal{S}(H_{i,j})| - 1 - r_{i,j}) + (4B - |\mathcal{S}(H_{i,j})| + 2 - 1 - (2B - r_{i,j})) = 2B$ vertices of weight $M$ from $\hat{D}_{i,j} \cup \hat{U}_{i,j}$.
        For any fixed $i \in [k]$, it follows that
        \begin{align*}
            |V(\mathcal{C}_i)| &= L +
            \sum_{j \in [i-1]} \big( \Sigma(H_{i,j}) - \Sigma(\mathcal{H}_{i,j}) + 2BM \big) +
            \sum_{j \in [i+1,k]} \big( \Sigma(\mathcal{H}_{i,j}) + 2BM \big)\\
            &= L + (k-1) \cdot 2BM + B\\
            &= T,
        \end{align*}
        where the second equality is due to \Cref{eq:rubp->cnc}.
        This concludes the proof.
    \end{proof}

    \begin{lemma}\label{lemma:cnc->rubp}
        If $(G,k',x)$ is a Yes-instance of \CNC, then $(A,k,f)$ is a Yes-instance of \RUBP.
    \end{lemma}

    \begin{proof}
        Let $S \subseteq V(G)$ with $|S| \leq k' = 2 \binom{k}{2}$
        such that the number of pairs of connected vertices in $G - S$ is minimized and is at most $x$.
        Notice that $|S| = k'$, as otherwise there always exists a vertex whose deletion reduces said number.
        For all $i \in [k]$ it holds that $|\hat{C}_i| > k'$,
        thus the vertices of $\hat{C}_i \setminus S$ belong to the same connected component of $G-S$.

        \begin{claim}\label{claim:cnc->rubp:del_per_path}
            For any $i,j \in [k]$ with $1 \le i < j \le k$ it holds that
            $|S \cap \hat{U}_{i,j}| \le 1$ and $|S \cap \hat{D}_{i,j}| \le 1$.
        \end{claim}

        \begin{claimproof}
            Let $i,j \in [k]$ with $1 \le i < j \le k$ such that $|S \cap \hat{Z}_{i,j}| \ge 2$,
            where $\hat{Z}_{i,j} \in \{ \hat{U}_{i,j}, \hat{D}_{i,j} \}$,
            and let $v$ denote the rightmost vertex of $\hat{Z}_{i,j}$ that belongs to $S$,
            that is, no vertex of $\hat{Z}_{i,j}$ between $v$ and the vertices of $\hat{C}_j$ belongs to $S$.
            Let $\mathcal{C}_j$ denote the connected component of $G-S$ that contains the vertices of $\hat{C}_j$.
            It holds that $|V(\mathcal{C}_j)| \ge 2L/3$, since $c \ge 3k'$.
            Let $v' \in \hat{C}_j \setminus S$ and consider the deletion set $S' = (S \setminus \{ v \}) \cup \{ v' \}$,
            where $|S'| = |S|$.

            We argue that $\pairs(G-S') < \pairs(G-S)$, thus contradicting the optimality of $S$.
            Indeed, starting from $G-S$ it holds that not deleting $v$ increases the number of connected pairs by at most
            $4BM \cdot |V(\mathcal{C}_j)| + 4BM \cdot M = (2L/3 k^2 c) \cdot (|V(\mathcal{C}_j)| + M)$,
            while subsequently deleting $v'$ decreases the number of pairs by at least
            $(|V(\mathcal{C}_j)| - L/c) \cdot (L/c)$.
            It holds that
            \begin{align*}
                (|V(\mathcal{C}_j)| - L/c) \cdot (L/c) > (2L/3 k^2 c) \cdot (|V(\mathcal{C}_j)| + M) \iff\\
                |V(\mathcal{C}_j)| - L/c > (2/3 k^2) \cdot (|V(\mathcal{C}_j)| + M) \iff\\
                |V(\mathcal{C}_j)| \left( 1 - \frac{2}{3k^2} \right) - L/c > \frac{2}{3} k^2 M \impliedby\\
                |V(\mathcal{C}_j)| \left( 1 - \frac{2}{3k^2} \right) - L/c > L/c \impliedby\\
                \frac{2L}{3} \cdot \left( 1 - \frac{2}{3k^2} \right) > \frac{2L}{6k^2} \iff\\
                \left( 1 - \frac{2}{3k^2} \right) > \frac{1}{2k^2} \iff\\
                1 \ge \frac{5}{6k^2},
            \end{align*}
            which holds for all $k \ge 1$ and
            where we used the fact that $\frac{2}{3} k^2 M < \frac{L}{c}$ as well as that $|V(\mathcal{C}_j)| \ge 2L/3$.
        \end{claimproof}

        We next argue that every connected component of $G-S$ contains vertices belonging to at most one clique $\hat{C}_i$.

        \begin{claim}\label{claim:cnc->rubp:main_claim}
            For any connected component $\mathcal{C}$ of $G-S$ it holds that
            \[ |\setdef{i \in [k]}{V(\mathcal{C}) \cap \hat{C}_i \neq \varnothing}| \le 1 \,. \]
        \end{claim}

        \begin{claimproof}
            We first fix some notation used throughout the proof.
            Let $\Pi \subseteq 2^{[k]}$ be the partition of $[k]$ such that
            $p \in \Pi$ if and only if there exists a connected component $\mathcal{C}_p \in \cc (G-S)$
            with $p = \setdef{i \in [k]}{V(\mathcal{C}_p) \cap \hat{C}_i \neq \varnothing}$.
            Notice that for $ p \neq \varnothing$, such a component is unique.
            For $p \in \Pi$ we further define the set $\hat{R}(p)$
            which is composed of the vertices $\bigcup_{i \in p} \hat{C}_i \cup \bigcup_{i,j \in p} (\hat{U}_{i,j} \cup \hat{D}_{i,j})$
            plus the vertices belonging to the paths attached to the former vertices.
            Let $g \colon \cc(G-S) \to [0,k]$ such that
            $g(\mathcal{C}) = |\setdef{i \in [k]}{V(\mathcal{C}) \cap \hat{C}_i \neq \varnothing}|$ for all $\mathcal{C} \in \cc(G-S)$.
            Consider a component $\mathcal{C}_{p_{\max}} \in \cc(G-S)$ that maximizes $g$, where $g(\mathcal{C}_{p_{\max}}) = |p_{\max}| \ge 1$.
            It suffices to prove that $|p_{\max}| < 2$.
            Towards a contradiction assume that $|p_{\max}| \ge 2$.

            Let $\mathcal{C}_{p'}$ with $g(\mathcal{C}_{p'}) = |p'| < |p_{\max}|$.
            We argue that $S \cap \hat{R}(p') = \varnothing$.
            To see this, notice that the deletion of such a vertex results in at most
            \[
                \frac{L}{c} \cdot \left( |p'| \cdot L + \frac{L}{c} \right)
            \]
            disconnected pairs, where we use the fact that the number of vertices of $G$
            excluding those belonging to the cliques $\hat{C}_1, \ldots, \hat{C}_k$
            or to a path attached to them is less than $L/c$.
            On the other hand, $|V(\mathcal{C}_{p_{\max}})| \ge |p_{\max}| \cdot L - k' \cdot (L/c) \ge (|p_{\max}|-1) \cdot L + (2L)/3$,
            thus the deletion of a vertex $v \in \hat{C}_i \setminus S$ with $i \in p_{\max}$ results in at least
            \[
                \frac{L}{c} \cdot \left( (|p_{\max}|-1) \cdot L + \frac{2L}{3} \right) > \frac{L}{c} \cdot \left( |p'| \cdot L + \frac{L}{c} \right)
            \]
            disconnected pairs.

            The previous paragraph, along with \Cref{claim:cnc->rubp:del_per_path} and the fact that $|S| = 2\binom{k}{2}$,
            imply that there exists $p \in \Pi$ with $|p| = |p_{\max}|$ such that
            $|S \cap \hat{R}(p)| \ge 2 \binom{|p|}{2}$.
            Let $S' = (S \setminus \hat{R}(p)) \cup \setdef{v^{i,j}_0, \sigma^{i,j}_0}{i,j \in p}$,
            where $|S'| \le |S|$.
            We argue that $\pairs(G-S') < \pairs(G-S)$.
            Let $x_1$ (resp., $x_2$) denote the number of pairs of connected vertices in $G-S$ (resp., $G-S'$)
            such that at least one vertex belongs to $\hat{R}(p)$ (notice that the rest of the pairs remain the same in the two graphs).
            It holds that
            \begin{align*}
                2 \cdot x_1 &> 2 \cdot \binom{(|p|-1) \cdot L + 2L/3}{2}\\
                &\ge ((|p|-1) L + 2L/3) \cdot ((|p|-1) L + L/3)\\
                &= L^2 (|p|^2 - |p| + 2/9),
            \end{align*}
            where the first inequality is due to the component of $G-S$ that contains the vertices of $\setdef{\hat{C}_i}{i \in p} \setminus S$.
            On the other hand, in $G-S'$ the vertices of each $\hat{C}_i$ for $i \in p$ are in distinct connected components,
            thus it holds that
            \begin{align*}
                2 \cdot x_2 &< 2 |p| \cdot \binom{L + L/c}{2}\\
                &\le |p| (L + L/c)^2\\
                &= L^2 \big( |p| + |p| \cdot \frac{2c+1}{c^2} \big)\\
                &\le L^2 (|p| + 2/9),
            \end{align*}
            where the first inequality is due to the fact that the number of vertices of $G$ apart from those of $\bigcup_{i \in [k]} \hat{C}_i$ or attached to them is at most $L/c$.
            The last inequality is due to the fact that
            $|p| \cdot \frac{2c+1}{c^2} \le k \cdot \frac{12k^2+1}{36k^4}$,
            which for $k \ge 2$ is upper-bounded by $2/9$.

            Finally, we have that
            \begin{align*}
                x_1 &> x_2 \impliedby\\
                L^2 (|p|^2 - |p| + 2/9) &\ge L^2 (|p| + 2/9) \iff\\
                |p|^2 &\ge 2|p|,
            \end{align*}
            which holds as by assumption $|p| \ge 2$.
            This contradicts the optimality of set $S$.
        \end{claimproof}

        By \Cref{claim:cnc->rubp:main_claim} and the fact that $k' = 2\binom{k}{2}$ it follows that
        $S$ contains exactly one vertex per path between cliques.
        In that case, since the vertices of those paths are of weight $1$ or $M$,
        regarding the connected components of $G-S$ it holds that there are exactly $k$ of size at least $L$,
        as well as $a \in [\binom{k}{2}, \, 2\binom{k}{2}]$ of size exactly $M-1$,
        where $a \ge \binom{k}{2}$ since half the paths between cliques contain only vertices of weight $M$.
        Further notice that the number of pairs of connected vertices is minimized when each component of size at least $L$
        contains $\frac{|V(G)| - a \cdot M}{k}$ vertices,%
        \footnote{To see this notice that $\binom{n - \varepsilon}{2} + \binom{n + \varepsilon}{2} > 2 \cdot \binom{n}{2}$ for all $\varepsilon > 0$.}
        while for all $a \le 2\binom{k}{2}$ it holds that
        \[
            k \cdot \binom{(|V(G)| - a \cdot M) / k}{2} + a \cdot \binom{M-1}{2} >
            k \cdot \binom{(|V(G)| - (a+1) \cdot M) / k}{2} + (a+1) \cdot \binom{M-1}{2}.
        \]
        Consequently, the minimum number of pairs of connected vertices is obtained exactly when there are
        $2 \binom{k}{2}$ components in $G-S$ of size $M-1$, and the rest of the vertices are partitioned equally among the remaining $k$ components.
        Notice that
        \begin{align*}
            |V(G)| &= k \cdot L + (4B+2) M \cdot \binom{k}{2} + kB\\
            &= k \cdot (L + 2BM(k-1) + B) + 2M \cdot \binom{k}{2}\\
            &= k \cdot T + 2 \binom{k}{2} \cdot M,
        \end{align*}
        thus since $x = k \cdot \binom{T}{2} + 2 \binom{k}{2} \cdot \binom{M-1}{2}$ it follows that this is indeed the case for $G-S$.
        Consequently, for all $1 \le i < j \le k$, there exists $\sigma^{i,j}_q \in S$ with $q \in \mathcal{S}(H_{i,j})$.

        Let $\mathcal{C}_i$ denote the connected component of $G - S$ containing the vertices of $\hat{C}_i$.
        Since $\mathcal{C}_i$ contains $T$ vertices,
        while $kB < M$, it follows that $\hat{\mathcal{C}}_i$ contains exactly $B$ vertices of weight~$1$.
        Let $(\mathcal{A}_1, \ldots, \mathcal{A}_k)$ be a partition of $A$ defined in the following way:
        for all $1 \leq i < j \leq k$, if $\sigma^{i,j}_q \in S$,
        then $\mathcal{A}_i \cap H_{i,j} = \mathcal{H}_{i,j}$ and $\mathcal{A}_j \cap H_{i,j} = H_{i,j} \setminus \mathcal{H}_{i,j}$,
        where $\mathcal{H}_{i,j} \subseteq H_{i,j}$ such that $\Sigma(\mathcal{H}_{i,j}) = q$.
        Notice that $\Sigma(\mathcal{A}_i)$ is equal to the number of vertices of weight $1$ in $\mathcal{C}_i$,
        therefore $\Sigma(\mathcal{A}_i) = B$ follows.
    \end{proof}

    \begin{lemma}\label{lemma:fes:parameter_bound}
        It holds that $\fes(G) = \bO (k^5)$,
        $\Delta(G) = \bO(k^2)$,
        and $\pw(G) = \bO(k^3)$.
    \end{lemma}
    \begin{proof}
        For the feedback edge number, let $F \subseteq E(G)$ contain all edges between vertices of~$\hat{C}_i$, for all $i \in [k]$,
        as well as all edges between such vertices and endpoints of paths~$\hat{U}_{i,j}$ and~$\hat{D}_{i,j}$.
        Notice that
        \[
            |F| = k \cdot \binom{c}{2} + 4\binom{k}{2} \cdot c
            = \bO(k^5),
        \]
        while the graph remaining after the deletion of the edges in $F$ is a caterpillar forest.
        For the pathwidth bound notice that $G - \bigcup_{i=1}^k \hat{C}_i$ is a caterpillar forest,
        thus $\pw(G) = \bO(k^3)$.
        Lastly, for the maximum degree notice that for all \mbox{$v \in \bigcup_{i \in [k]} \hat{C}_i$,
        $|N(v)| = c + 2(k-1) = \bO(k^2)$}.
        As for the vertices of $\hat{D}_{i,j} \cup \hat{U}_{i,j}$,
        they are of degree at most $3$,
        apart from the endpoints which are neighbors with all the vertices of a single clique,
        therefore of degree~$\bO(k^2)$.
        Any remaining vertex is of degree at most $2$.
    \end{proof}
    Due to \Cref{lemma:rubp->cnc,lemma:cnc->rubp,lemma:fes:parameter_bound}, \Cref{thm:CNC:fes} follows.
\end{proof}

\section{Algorithms}\label{sec:algo}

In this section we identify several structural parameters that render {\CNC} fixed-parameter tractable,
namely max-leaf number, vertex integrity, and modular-width.
Furthermore, we show that the problem parameterized by clique-width belongs to XP.

\subsection{Max-Leaf Number}

Here we present a single-exponential algorithm for {\CNC} parameterized by the max-leaf number of the input graph.
This is a well-studied but very restricted parameter~\cite{ejc/FellowsJR13,mst/FellowsLMMRS09,algorithmica/Lampis12}.
Note that this nicely complements the W[1]-hardness of the problem parameterized by $\fes+\Delta$ which follows by \Cref{thm:CNC:fes};
graphs of max-leaf number at most $k$ are known to be a subdivision of a graph of at most $\bO(k)$ vertices~\cite{siamdm/KleitmanW91},
thus their feedback edge set number and their maximum degree are bounded by a function of $k$.
As a matter of fact, the reduction of \Cref{thm:CNC:fes} produces a graph that, on a first look, appears to have bounded
max-leaf number. The value of this parameter increases due to attaching long paths on the weighted vertices, and \Cref{thm:CNC:ml}
implies that this increase is unavoidable, as the problem is FPT under this parameterization.

Our algorithm closely follows the one presented by Hanaka, Lampis, Vasilakis, and Yoshiwatari~\cite{mfcs/HanakaLVY24}
for {\VI} parameterized by max-leaf number.
In particular, our first step is to contract any cycles, reducing the instance to a weighted forest on which some of the vertices
are marked as undeletable.
Then we apply a simple DP algorithm for trees in order to compute, for every connected component of the instance,
the minimum number of pairs of connected vertices remaining after at most $b$ vertex deletions from said component, for all $b \in [0,k]$.
Having computed this for every component, applying \Cref{lemma:minimize_functions} then allows us to determine the
number of deletions per component such that the total number of connected pairs is minimized.

\begin{theoremrep}[\appsymb]\label{thm:CNC:ml}
    There is an algorithm that given any instance $\mathcal{I} = (G,k,x)$ of \CNC,
    it decides $\mathcal{I}$ in time $2^{\bO(\ml)} n^{\bO(1)}$,
    where $\ml$ denotes the max-leaf number of $G$.
\end{theoremrep}

\begin{proof}
    Let $X = \setdef{v \in V}{\deg_G(v) \ge 3}$ denote the set of vertices of degree at least $3$ in $G$.
    By a well-known result by Kleitman and West~\cite{siamdm/KleitmanW91} it holds that $|X| = \bO(\ml)$.
    We make use of the following result of Hanaka, Lampis, Vasilakis, and Yoshiwatari~\cite{mfcs/HanakaLVY24}.

    \begin{lemma}[{\cite[Lemma 9]{mfcs/HanakaLVY24}}]
        Let $G=(V,E)$ be a graph and $X$ be a set of vertices such that all vertices
        of $V \setminus X$ have degree at most $2$ in $G$.
        For all positive integers $\ell,p$, if there exists a separator $S$ of size at most $p$
        such that all components of $G - S$ have size at most $\ell$,
        then there exists such a separator $S$ that also satisfies the following property:
        for every cycle $C$ of $G$ with $C \cap X \neq \varnothing$ we
        either have $C \cap S = \varnothing$ or $C \cap X \cap S\neq \varnothing$.
    \end{lemma}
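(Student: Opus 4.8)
\emph{Proof plan.}
The plan is an exchange argument: start from an arbitrary separator witnessing feasibility and turn it into one that additionally respects cycles through $X$, while keeping both the size bound and the component bound. First I would record the structure forced by the hypothesis. Since every vertex of $V\setminus X$ has degree at most $2$, the graph $G-X$ is a disjoint union of paths and cycles, so $G$ is obtained from $X$ by joining its vertices through \emph{threads} (maximal induced paths of degree-$\le 2$ vertices), and every vertex of $V\setminus X$ lying on a cycle has degree exactly $2$ with both neighbours on that cycle. Hence a cycle $C$ with $C\cap X\neq\varnothing$ decomposes into its \emph{ports} $C\cap X$ and \emph{arcs} of self-contained degree-$2$ vertices joining consecutive ports, and only the ports of $C$ can have neighbours off $C$. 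In particular $S$ violates the required property exactly when some cycle $C$ has a port while $\varnothing\neq S\cap V(C)$ is contained in the open arcs of $C$; note also that such a cycle lies in a single $2$-connected block $B$ of $G$, whose non-$X$ vertices have no neighbours outside $B$, so that the pieces of $G$ hanging off distinct ports of $B$ (outside $B$) are pairwise disjoint.

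Among all separators $S$ with $|S|\le p$ and all components of $G-S$ of size at most $\ell$, I would fix one that \emph{maximizes} $|S\cap X|$. Assume for contradiction it still violates the property, take a block $B$ containing a witnessing cycle, and fix a port $x^{*}$ of $B$ lying on such a cycle (so $x^{*}\notin S$). Write $t:=|S\cap V(B)|$ (all of these deletions lie in open arcs of $B$, since any port of $B$ in $S$ would already be a port in $S$ on the witnessing cycle). Since each component of $G-S$ has size at most $\ell$, for every port $p\notin S$ the outside-$B$ piece $\mathrm{Ext}(p)$ attached at $p$ has $|\mathrm{Ext}(p)|\le \ell-1$, and these are disjoint. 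The goal is to build $S'$ with $|S'|\le p$, all components of $G-S'$ of size at most $\ell$, and $|S'\cap X|>|S\cap X|$, which contradicts the choice of $S$.

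The core step is to \emph{re-route} the deletions of $S$ inside $B$. Model $B$ as a weighted graph by giving a non-port vertex weight $1$ and a port $p$ weight $1+|\mathrm{Ext}(p)|$ (all weights at most $\ell$); then the size of a component of $G-S$ meeting $B$ equals the total weight of the corresponding ``inflated piece'' of $B$ minus the deletions, and the old $t$ unit-weight deletions cut $B$ into pieces each of weight at most $\ell$, so $B$ has total weight at most $t(\ell+1)$ (this is cleanest when $B$ is a single cycle; for a general block one argues analogously piece by piece). Now delete $x^{*}$ and greedily re-cut the remaining weighted structure into pieces of weight at most $\ell$, always deleting \emph{the vertex that caused the overflow}: the ``heavy windows'' so produced are pairwise disjoint, each of weight at least $\ell+1$, hence their number is less than $\big(t(\ell+1)-1\big)/(\ell+1)\le t$, i.e.\ at most $t-1$. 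Letting $D\subseteq V(B)$ consist of $x^{*}$ together with these at most $t-1$ vertices and setting $S'=(S\setminus V(B))\cup D$, every component of $G-S'$ is either an inflated piece of weight $\le\ell$, some $\mathrm{Ext}(p)$ for a newly deleted port $p$ (of size $\le\ell-1$), or an old component untouched by $B$; so all have size $\le\ell$, while $|S'|\le |S|-t+|D|\le |S|\le p$ and $|S'\cap X|\ge|S\cap X|+1$ because the removed vertices are non-ports and $x^{*}\in X$ is added. This is the desired contradiction, so the extremal $S$ satisfies the property.

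I expect the main obstacle to be exactly this re-routing/counting step: making precise (i) that one must delete the overflowing vertex, so that the heavy windows are disjoint and the budget of $t-1$ extra deletions is respected, and (ii) that for a general $2$-connected block — not just a cycle — the ``total weight $\le t(\ell+1)$'' estimate and the greedy re-cut still go through (so that chord-paths inside $B$ do not re-merge the new pieces). Everything else — the degree-$2$ structure, the $|\mathrm{Ext}(p)|\le\ell-1$ bound, the disjointness of the outside-$B$ pieces via block structure, and the observation that a violation is always witnessed by a cycle hitting $S$ only in open arcs — is routine bookkeeping.
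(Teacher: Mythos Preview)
The paper does not itself prove this lemma: it is quoted from Hanaka, Lampis, Vasilakis, and Yoshiwatari, and the only remark the paper adds is that the original proof ``uses an exchange argument such that for every component either its size remains the same or it is split into multiple components.'' So there is no detailed argument in the present paper to compare your proposal against --- only that one-line hint.

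Your extremal set-up (maximise $|S\cap X|$) is in that spirit, but the mechanism you propose --- pass to the full $2$-connected block $B$ containing a violating cycle, weight each port $p$ by $1+|\mathrm{Ext}(p)|$, bound the total weight of $B$, and greedily re-cut $B$ after deleting a single port $x^{*}$ --- is considerably heavier than the hint suggests and carries two genuine gaps. One you already flag: when $B$ is not itself a cycle, $B\setminus\{x^{*}\}$ is not a path, so the ``delete the vertex that caused the overflow'' scan has no linear order to walk along, and chord-paths inside $B$ can reconnect pieces you believed you had separated; your parenthetical ``one argues analogously piece by piece'' is precisely where the missing work sits. The second gap you do not flag: the assertion that every vertex of $S\cap V(B)$ is a non-port (``any port of $B$ in $S$ would already be a port in $S$ on the witnessing cycle'') is false in general. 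The witnessing cycle $C$ satisfies $C\cap X\cap S=\varnothing$, but the block $B$ may contain a port $y\in X\cap S$ lying off $C$; every cycle of $B$ through $y$ then meets $X\cap S$ and is therefore \emph{not} violating, which is perfectly consistent with $C$ being a witness. In that situation your replacement set $D$, guaranteed to contain only the single port $x^{*}$, need not have more $X$-vertices than the old $S\cap V(B)$, so the inequality $|S'\cap X|>|S\cap X|$ --- the entire contradiction --- does not follow. The paper's description points instead to a more local exchange on the cycle $C$ itself, which avoids both the block-level bookkeeping and the re-cutting.
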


    The proof of~\cite[Lemma 9]{mfcs/HanakaLVY24} uses an exchange argument such that
    for every component either its size remains the same or it is split into multiple components.
    In that case, exactly the same proof can be used to show the following.

    \begin{observation}\label{observation:ml:good_seperator}
        Let $G=(V,E)$ be a graph and $X$ be the set of its vertices of degree at least $3$ in $G$.
        For all positive integers $k,x$, if there exists a separator $S$ of size at most $k$
        such that the number of pairs of connected vertices in $G - S$ is at most $x$,
        then there exists such a separator $S$ that also satisfies the following property:
        for every cycle $C$ of $G$ with $C \cap X \neq \varnothing$ we
        either have $C \cap S = \varnothing$ or $C \cap X \cap S \neq \varnothing$.
    \end{observation}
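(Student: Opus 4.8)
The plan is to mirror the exchange argument behind \cite[Lemma 9]{mfcs/HanakaLVY24}, needing only one extra observation. That exchange modifies a separator so that the partition of the vertex set into components only gets \emph{refined}: each component of $G-S$ either survives with the same size or is broken into several smaller pieces, and no two components ever merge. For {\VI} this is used to keep all component sizes below the prescribed threshold; here it suffices to note that refining $\cc(G-S)$ cannot increase $\pairs(\cdot)$, since $\binom{a}{2}+\binom{b}{2}\le\binom{a+b}{2}$ and $\binom{\cdot}{2}$ is monotone. So I would fix, among all separators $S$ with $|S|\le k$ and $\pairs(G-S)\le x$, one that minimizes $|S\setminus X|$, and argue that it already satisfies the claimed cycle property.

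Next I would assume for contradiction that some cycle $C$ violates it: $C\cap X\neq\varnothing$, $C\cap S\neq\varnothing$, and $C\cap X\cap S=\varnothing$. Then every vertex of $C\cap S$ lies outside $X$ and hence has degree exactly $2$ in $G$. Pick $w\in C\cap X$ (so $w\notin S$) and let $v$ be the first vertex of $C\cap S$ met while traversing $C$ away from $w$; set $S'=(S\setminus\{v\})\cup\{w\}$. Since $v\in S\setminus X$ and $w\in X\setminus S$, we get $|S'|=|S|\le k$ and $|S'\setminus X|=|S\setminus X|-1$, so it only remains to show $\pairs(G-S')\le\pairs(G-S)$, which then contradicts minimality.

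For this I would show that every component of $G-S'$ lies, after removing the possibly-added vertex $v$, inside a single component of $G-S$, and then conclude by grouping the components of $G-S'$ according to the old component they fall into: each old component contributes its vertices minus at most one (namely $w$) plus at most one (namely $v$), so the total size per group is at most that of the old component, and superadditivity of $\binom{\cdot}{2}$ gives $\pairs(G-S')\le\pairs(G-S)\le x$. To establish the refinement, let $Z$ be a component of $G-S'$. If $v\notin Z$, then $Z$ uses no edge at $v$ and does not contain $w$, so $Z$ is already connected in $G-S$. If $v\in Z$, then $Z\setminus\{v\}$ is the union of the components, in $G-(S\cup\{w\})$, of the at most two neighbors of $v$ surviving in $G-S'$; if at most one survives there is nothing to merge. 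The remaining case — both neighbors $a,b$ of $v$ survive — is the heart of the matter: because $v$ lies on the cycle $C$, deleting $v$ does not separate $a$ from $b$ in $G-S$, so their two pieces lie in a common component of $G-S$ (the one containing $w$), hence so does $Z$. This rerouting is immediate when $C$ meets $S$ only in $v$, and in general is precisely the rerouting carried out in the exchange step of \cite[Lemma 9]{mfcs/HanakaLVY24} — which is why $v$ is chosen as the $S$-vertex closest to $w$ along $C$. I expect this no-merge step to be the only real obstacle; everything else is bookkeeping with $\binom{\cdot}{2}$.
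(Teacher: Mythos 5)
Your high-level plan coincides with the paper's: the paper likewise reduces the observation to the exchange argument of Lemma~9 in the cited work of Hanaka et al., adding only the remark that an exchange which never merges or grows components cannot increase the number of connected pairs, by superadditivity of $\binom{\cdot}{2}$. The gap is in your reconstruction of that exchange. The step you yourself call the heart of the matter is justified by the claim that, when both neighbours $a,b$ of the chosen vertex $v$ survive, ``deleting $v$ does not separate $a$ from $b$ in $G-S$'' because $v$ lies on $C$. This is only true when $C\cap S=\{v\}$. If $C$ contains further deleted degree-$2$ vertices, the arc of $C$ leading from $b$ back to $w$ may itself be cut by $S$, so $b$ can lie in a component of $G-S$ different from the one containing $w$ and $a$; then $S'=(S\setminus\{v\})\cup\{w\}$ merges these two old components into one new component (and, in the sub-case $a=w$, it can grow $b$'s component by the vertex $v$). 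In either situation $\pairs(G-S')$ can strictly exceed $\pairs(G-S)$, and even exceed $x$, so $S'$ need not be feasible and the contradiction with the minimality of $|S\setminus X|$ does not go through. Choosing $v$ as the $S$-vertex of $C$ closest to $w$ does not help, since the obstruction sits on the far side of $v$.

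Concretely, take the $7$-cycle $w,u_1,v,u_2,u_3,v',u_4$ with a pendant vertex $z$ attached to $w$, and $S=\{v,v',z\}$. Then $X=\{w\}$, the cycle violates the property, and $G-S$ has components $\{w,u_1,u_4\}$ and $\{u_2,u_3\}$, i.e.\ $4$ connected pairs; your exchange gives $S'=\{w,v',z\}$, and $G-S'$ contains the component $\{u_1,v,u_2,u_3\}$, i.e.\ $6$ connected pairs. (This particular $S$ does not minimize $|S\setminus X|$, but nothing in your argument invokes minimality to exclude such configurations, so the inequality $\pairs(G-S')\le\pairs(G-S)$ remains unproven exactly where it is needed.) A correct exchange has to treat all of $C\cap S$ at once (or extract more structure from the extremal choice of $S$) so that components only persist or split; that is precisely the content of the cited Lemma~9, which the paper reuses as a black box and augments only with the superadditivity remark, and which your proposal does not actually reconstruct.
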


    Now, fix an optimal solution $S \subseteq V$ of size $|S| \le k$ that maximizes $|S \cap X|$.
    We first guess its intersection with $X$ and delete these vertices,
    while marking the vertices of $X \setminus S$ undeletable.
    For every component $\mathcal{C}_i \in \cc(G-(S \cap X))$,
    we compute a function \mbox{$f_i \colon [0,k - |S \cap X|] \to \left[ \binom{|V(\mathcal{C}_i)|}{2} \right]$},
    such that for all $j \in [0,k - |S \cap X|]$,
    $f_i(j)$ is equal to the minimum number of pairs of connected vertices of $\mathcal{C}_i$
    after the deletion of exactly $j$ of its vertices.
    Notice that if $V(\mathcal{C}_i) \cap X = \varnothing$,
    then $\mathcal{C}_i$ is either a path or a cycle and it is straightforward to compute $f_i$.

    Assume that $V(\mathcal{C}_i) \cap X \neq \varnothing$.
    If $\mathcal{C}_i$ does not contain any cycles, then it is a tree.
    Otherwise, let $C$ denote a cycle appearing in $\mathcal{C}_i$, in which case it holds that $C \cap X \neq \varnothing$,
    and by \Cref{observation:ml:good_seperator} it follows that $S \cap C = \varnothing$.
    In that case, we contract $C$ to a single vertex $v$ of weight equal to the sum of weights of the vertices of the cycle,
    that is, $\wt(v) = \sum_{c \in C} \wt(c)$,
    mark it as undeletable and add it to $X$.
    We repeat this exhaustively, resulting in a component which is a weighted tree.
    In that case one can easily adapt the DP algorithm for trees by Di Summa, Grosso, and Locatell~\cite{cor/SummaGL11}
    to account for the non-deletability and the weights and compute $f_i$ in polynomial time.

    Finally, having computed for every component $\mathcal{C}_i \in \cc(G - (S \cap X))$ the function $f_i$,
    it suffices to use \Cref{lemma:minimize_functions}
    in order to compute a $k_i \in [0,k - |S \cap X|]$ for every $\mathcal{C}_i$
    so that $\sum k_i \le k$ and $\sum f_i(k_i)$ is minimized,
    and verify whether $\sum f_i(k_i) \le x$.
    Regarding the running time of our algorithm, guessing $S \cap X$ incurs an overhead of $2^{\bO(\ml)}$,
    while the rest of the steps require polynomial time.
\end{proof}

\subsection{Vertex Integrity}

The \emph{vertex integrity} of a graph $G = (V,E)$, denoted $\vi(G)$, is the minimum integer $k$
such that there is a vertex set $U \subseteq V$ with $|U| + \max_{C \in \cc(G-S)} |V(C)| \le k$,
where $\cc(G-S)$ denotes the set of connected components of $G-S$.
Here we show that the parameterization by vertex integrity renders {\CNC} fixed-parameter tractable.
Given the fact that the vertex integrity of a graph is upper-bounded by its vertex cover number,
this improves over the known FPT algorithm parameterized by vertex cover number~\cite{jgaa/SchestagGKS24}.
On the other hand, the vertex integrity of a graph is lower-bounded by its tree-depth,
under which parameterization the problem remains W[1]-hard~\cite{tcs/HermelinKKN16},
so our result is in that sense tight.

\begin{theoremrep}[\appsymb]\label{thm:CNC:vi}
    There is an algorithm that given any instance $\mathcal{I} = (G,k,x)$ of \CNC,
    it decides $\mathcal{I}$ in time ${\vi}^{\bO(\vi^2)} n^{\bO(1)}$,
    where $\vi$ denotes the vertex integrity of $G$.
\end{theoremrep}

\begin{proof}
    Let $U \subseteq V(G)$ such that for all connected components $C \in \cc(G-S)$ it holds that
    $|U| + |V(C)| \le \vi(G)$; such a set can be computed in time ${\vi}^{\bO(\vi)} n^{\bO(1)}$~\cite{algorithmica/DrangeDH16}.
    Furthermore, let $\cc(G-U) = \{ C^1, \ldots, C^m \}$ denote the connected components of $G-U$, where $m \in [0,n]$;
    by assumption, it holds that $|V(C^j)| \le \vi$ for all $j \in [m]$.

    On a high level, our algorithm consists of two separate steps.
    In Step 1, we guess the intersection of $U$ with an optimal deletion set, as well as the way the non-deleted vertices of $U$
    are partitioned in connected components in the final graph.
    All remaining vertices of the optimal deletion set belong to components of $G - U$.
    Next, it suffices to encode the choices in the rest of the graph as an $m$-fold IP (see~\cite{KouteckyLO18}) of a small number of global constraints
    and small coefficients in the right-hand side, and find a solution that indeed minimizes the number of pairs of connected
    vertices in the final graph, while respecting the guesses of Step 1.

    Fix a deletion set $S^* \subseteq V(G)$ such that $|S^*| \le k$ and the number of pairs of connected vertices in $G-S^*$ is minimum.
    We start by guessing the intersection of said deletion set with the modulator $U$, that is, let $S = U \cap S^*$.
    We further guess how $U \setminus S$ is partitioned into connected components in $G - S^*$,
    and let $V_1, \ldots, V_\ell$ denote this partition, where $\ell \le \vi$.
    Notice that for $i \in [\ell]$, $G[V_i]$ is not necessarily connected,
    however it holds that if $uv \in E$ and $u,v \in U \setminus S$, then $u$ and $v$ must belong to the same connected component of $G-S^*$.
    Any guess $S, V_1, \ldots, V_\ell$ that adheres to the latter constraint, that is,
    no connected component of $G[U \setminus S]$ intersects more than one blocks of the partition, will be called \emph{valid}.

    For each valid guess, we will create an $m$-fold IP that computes the deletions from the sets $V(C^j)$, $j\in [m]$
    that minimize the number of pairs of connected vertices from the remaining graph, under the assumption that the guesses so far are correct.
    From all the solutions we will return the one that minimizes the number of pairs of connected vertices.
    Prior to presenting the $m$-fold IP, notice that since we have guessed that $S$ will be deleted,
    the number of deletions that we further do is at most $k-|S|$.
    We start by creating $\ell$ constants $n_i = |V_i|$, for $i \in [\ell]$.

    \proofsubparagraph{Local constraints.}
    Now we will consider each connected component $C^j$, $j \in [m]$ separately.
    Fix a $j \in [m]$.
    We say that the subset $X \subseteq V(C^j)$ is \emph{invalid} if
    there exist $s,t \in U \setminus S$ such that (i) $s \in V_{i_1}$, $t \in V_{i_2}$, and $i_1 \neq i_2$,
    and (ii) $s$ and $t$ belong to the same connected component of $G[(U \setminus S) \cup (V(C^j) \setminus X)]$.
    We say that a subset of $V(C^j)$ is \emph{valid} if it is not invalid.
    Notice that $\varnothing$ might be a valid subset of $V(C^j)$.

    For any given $C^j$, $j \in [m]$, let $S^j_1, \ldots, S^j_\mu$, where $\mu \le 2^{|V(C^j)|} \le 2^\vi$,
    be an enumeration of all valid subsets of $V(C^j)$.
    Notice that the intersection of an optimal solution with $V(C^j)$ is exactly one of the sets $S^j_1, \ldots, S^j_\mu$.
    In order to force this, we  will create a variable $I^j_q \in \{0,1\}$ for each $S^j_q$ that will indicate
    whether we select to delete the set $S^j_q$ or not.
    We will also add the following constraint that forces the IP to select at most one of these sets:
    \[
        \sum_{q\in [\mu]} I^j_q = 1.
    \]

    Next, for each $q \in [\mu]$ and $i \in [\ell]$, we compute the constant $c^{j,q}_i$ which is equal to the number of vertices of
    $V(C^j) \setminus S^j_q$ that are connected to vertices of $V_i$ in $G[V_i \cup (V(C^j)\setminus S^j_q)]$.
    This indicates that $c^{j,q}_i$ vertices of $V(C^j)$ will be connected to $V_i$ in the final graph if we delete $S^j_q$.

    We further compute for each $q \in [\mu]$ the constant $p^{j,q}$ which is equal to the number of pairs of connected vertices of $V(C^j)$
    that will appear in the final graph and which are not connected with any set $V_i$, $i \in [\ell]$.
    This can be done in time $\bO (\vi^2)$ as the graph $G[(U \setminus S) \cup (V(C^j) \setminus X)]$ has at most $\vi$ vertices.
    Notice that $p^{j,q} \le \binom{\vi}{2}$.

    Notice that the constraints $I^j_q$, $q \in [\mu]$, along with the constants $c^{j,q}_{i}$, $q \in [\mu]$, $i \in [\ell]$,
    allow us to compute the number of vertices of $C^j$ connected with each $V_i$ in the final graph, without knowing which set $S^j_q$ has been deleted.
    In particular, we know that the number of vertices that belong in $C^j$ and will be connected with $V_i$, for any $i \in [\ell]$, is exactly
    \[
        x^j_i = \sum_{q\in [\mu]} I^j_q c^{j,q}_i.
    \]

    Similarly, we can compute the number of pairs of connected vertices in $C^j$ after the deletions that are not connected with
    any set $V_i$, $i\in [\ell]$,
    which is
    \[
        p^j = \sum_{q \in [\mu]} I^j_q p^{j,q}.
    \]

    Finally, we introduce for each $q \in [\mu]$ a constant $d^j_q = |S^j_q|$ that accounts for the number of deletions
    made in $C^j$ when the intersection of the deletion set with $S^j$ is exactly $S^j_q$,
    that is, we set $d^j_q = |S^j_q|$.
    Consequently, regardless of the set we delete, the number of deletions in $C^j$ is exactly
    \[
        d^j = \sum_{q \in [\mu]} I^{j}_q d^j_q.
    \]

    We remark that, for each $j \in [m]$, we have $\bO (2^\vi)$ variables and $\bO (1)$ constraints.
    Also, all the coefficients are bounded by $\binom{\vi}{2}$.
    Furthermore, those can be considered as $j$ sets of variables and constraints as, for the moment,
    there is no constraint that includes variables related to distinct connected components $C^{j_1}$ and $C^{j_2}$, where $j_1 \neq j_2$.

    \proofsubparagraph{Global constraints.}
    In this part, we deal with the global variables and constraints.
    First we introduce the following constraint so that we force the number of deletions to be bounded:
    \[
        \sum_{j \in [m]} d^j = \sum_{j \in [m]} \sum_{q \in [\mu]} I^{j}_q d^j_q = k-|S|.
    \]
    Notice that $k-|S|$ is not a coefficient but a right-hand side constant, i.e., it is not multiplied by any variable. 

    Next, we create one variable $x_i$ for each $i \in [\ell]$ which represents the number of vertices that appear in the connected components
    each $V_i$ belongs to and we set
    \[
        x_i = n_i + \sum_{j\in [m]} x^j_i = n_i +\sum_{j\in [m]} \sum_{q\in [\mu]} I^j_q c^{j,q}_i.
    \]

    Lastly, we create a variable to account for the number of pairs of connected vertices which are not connected to any $V_i$, $i\in [\ell]$.
    Those are exactly:
    \[
        p = \sum_{j \in [m]} p^j = \sum_{j \in [m]} \sum_{q \in [\mu]} I^j_q p^{j,q}.
    \]

    Notice that the number of global constraints is $\bO(\vi)$ and the size of any coefficient is $\bO(\vi^2)$.

    It remains to define the objective function.
    In particular, we want to minimize the following:
    \[
        p + \sum_{i\in [\ell]} \binom{x_i}{2}.
    \]

    The above formulation is an $m$-fold IP with a separable convex objective function.
    Indeed, if we ignore the $\bO(\vi)$ global constraints,
    then we can partition the variables into $m$ (disjoint) sets of size $2^{\bO(\vi)}$ and
    the constraints into $m$ (disjoint) sets of size at most $\bO(1)$ such that,
    the variables of the $i$-th variable set appear only in the $i$-th set of constraints.
    Additionally, all the coefficients are at most $\vi^2$.
    Therefore, we can conclude that the $m$-fold IP runs in $\vi^{\bO(\vi^2)}n^{\bO(1)}$ time~\cite{KouteckyLO18}.
    Therefore, it remains to prove that this returns a solution of the {\CNC} problem.

    We argue that the IP computes a value that represent a feasible set of deletions.
    Notice that the IP always computes the number of connected pairs based on the assumption that
    the considered partition $V_1,\ldots,V_{\ell}$ of $U\setminus S$ satisfies the following two properties after the deletions:
    \begin{enumerate}
        \item for each $i \in [\ell]$, all vertices of $V_i$ belong to the same connected component,
        \item there is no connected component that intersects with more than one $V_i$, $i \in [\ell]$.
    \end{enumerate}
    Although we have guaranteed the second condition (by considering only valid subsets of each $C^j$), we have no guarantee about the first.
    In order to see that the algorithm will indeed return the minimum number of connected pairs,
    it suffices to observe the following.
    Given the sets $S^1_{q_i}, \ldots, S^m_{q_m}$ we will delete from each component $C^1, \ldots, C^m$,
    \begin{itemize}
        \item if both conditions hold in $G - \left( S \cup \bigcup_{j=1}^m S^j_{q_j} \right)$,
        then the optimization function computes the actual number of connected pairs,

        \item if only the second condition holds in $G - \left( S \cup \bigcup_{j=1}^m S^j_{q_j} \right)$,
        then the optimization function computes a number of connected pairs greater than or equal to the number of connected pairs.
    \end{itemize}
    Therefore, when we have made the correct guesses, the algorithm returns the optimal solution while,
    in all other cases, the algorithm returns a value greater than or equal to a feasible solution (which cannot be smaller than the optimal).
    Finally, we need to check whether the number of connected pairs we computed is at most $x$.

    \proofsubparagraph{Running time.}
    It remains to argue about the running time of our algorithm. We first compute the set $U$ in time $\vi^{\bO(\vi)}n^{\bO(1)}$~\cite{algorithmica/DrangeDH16}.
    Then we are guessing which vertices of $U$ will be included in an optimal deletion set,
    as well as the way the non-deleted vertices of $U$ are partitioned in connected components in the final graph.
    These guesses result in an $2^{\vi} \vi^\vi$ overhead. For each guess, we create an $m$-fold IP.
    
    Note that each $m$-fold IP requires $(a\cdot r \cdot s)^{(r^2\cdot s +r\cdot s^2)} (m\cdot t\cdot L)^{\bO(1)}$
    time~\cite[Corollary~97]{arxiv/EisenbrandHKKLO19}.%
    \footnote{This result is a direct consequence of the results published in~\cite{ipco/HunkenschroderKLV25}.}
    In this expression, $a$ denotes the largest coefficient,
    $r$ denotes the number of global constraints,
    while $s$ and $t$ denote the number of constraints and the number of variables respectively, in each block of local constraints.
    Finally, $L$ denotes the length of the binary encoding of the numerical input data of the $m$-fold IP.
    In our case we have that $a \le \vi^2$,
    $r \le \vi$,
    $s = \bO(1)$,
    and $t \le 2^\vi$.
    Furthermore, $m \le n$ and $L = \Theta \big( (r +  m \cdot s) \cdot m \cdot t \log{n} \big)$.
    Therefore, each $m$-fold IP runs in $\vi^{\bO(\vi^2)}n^{\bO(1)}$ time.
    In total, the algorithm runs in $\vi^{\bO(\vi^2)}n^{\bO(1)}$ time.
\end{proof}
\subsection{Modular-width}


All our previous results are concerned with \emph{sparse} classes of graphs.
Here we prove that {\CNC} is fixed-parameter tractable parameterized by the modular-width of the input
graph, thus improving over the analogous result for neighborhood diversity~\cite{mthesis/Schestag21}.

\begin{theoremrep}[\appsymb]\label{thm:CNC:mw}
    There is an algorithm that given any instance $\mathcal{I} = (G,k,x)$ of \CNC,
    it decides $\mathcal{I}$ in time $2^{\mw} n^{\bO(1)}$,
    where $\mw$ denotes the modular-width of $G$.
\end{theoremrep}

\begin{proof}
    We first recall some necessary definitions.
    A \emph{modular decomposition} of a graph $G$ is a rooted tree $T_G$ where any node $b$ with $c$ children is
    also associated with a graph $H_b$ consisting of $c$ vertices $v_1, \ldots, v_c$. 
    Additionally, any node $b$ of $T_G$ represents a graph $G_b = (V_b, E_b)$ defined recursively:
    \begin{itemize}
        \item a leaf node represents an isolated vertex,
        \item a non-leaf node $b$ with $c$ children $b_1, \ldots, b_c$ represents the graph $G_b = (V_b, E_b)$ where $V_b = \bigcup_{i\in [c]} V_{b_i}$ and $E_b = \setdef{u_i u_j}{u_i \in V_{b_i}, \, u_j \in V_{b_j} \text{ and } v_i v_j \in E(H_b)} \cup \bigcup_{i \in [c]} E_{b_i}$.
    \end{itemize}
    Finally, the graph $G_r$, where $r$ is the root of $T_G$, is isomorphic to $G$.

    The \emph{width} of a modular decomposition $T_G$ is equal to the maximum number of children of any node in $T_G$.
    The \emph{modular-width} of a graph $G$, denoted $\mw$, is the minimum width among the modular decompositions of $G$.
    Notice that, by definition, $T_G$ has exactly $n$ leaves.
    Additionally, we can assume that there is no node in $T_G$ with exactly one child;
    indeed, such a node would represent the same graph as its child node.
    Therefore, since $T_G$ is a tree, it has at most $2n-1$ nodes.

    Let $T_G$ be a modular decomposition of $G$ computed in polynomial time~\cite{dm/McConnellS99}.
    We will perform dynamic programming over the nodes of $T_G$.
    Let $b$ be a node of $T_G$.
    For the graph $G_b$, we will compute a function $f_b \colon [0,k] \to \mathbb{N}$ such that
    for $k' \in [0,|V_b|-1]$, $f_b(k')$ is equal to the minimum number of pairs of connected vertices
    in $G_b$ after $k'$ vertex deletions.
    Notice that for any graph $G$ there is no reason to delete more than $|V(G)|-1$ vertices,
    thus we set $f_b(k') = +\infty$ for all $k' \in [|V_b|, k]$.
    This will also be useful later when we will need to guarantee that we are not deleting the whole set $V_b$.
    We explain how to compute recursively the function $f_b$.

    If $b$ is a leaf node of $T_G$, then the graph $G_b$ is isomorphic to an isolated vertex,
    in which case $G_b$ has no pairs of connected vertices.
    Consequently, we set $f_b(0)=0$ and $f_b(k')=+\infty$ for all $k' \in [k]$.

    Alternatively, it holds that $b$ is a non-leaf node of $T_G$, and let $b_1, \ldots, b_{\ell}$, with $\ell \le \mw$,
    denote the nodes of $T_G$ that are children of $b$.
    Recall that $b$ is related to a graph $H_b = (V_b,E_b)$ with $V(H_b) = \{ v_1, \ldots, v_\ell\}$ and that in $G_b$,
    for all vertices $u_i \in V_{b_i}$ and $u_j \in V_{b_j}$ it holds that
    $u_i u_j \in E(G_b)$ if and only if $v_i v_j \in E(H_b)$.

    To deal with this case we adopt an exhaustive approach.
    Fix $S \subseteq [\ell]$ and let $U = \bigcup_{i \in S} V_{b_i}$.
    We will compute the function $f^S_b \colon [0, |V_b|-1] \to \mathbb{N}$ that, for $k' \in [0, |V_b|-1]$,
    returns the minimum number of pairs of connected vertices remaining in $G_b$ after $k'$ vertex deletions under the assumptions that:
    \begin{itemize}
        \item all vertices of $U$ are deleted, and
        \item no set $V_{b_i}$, $i \notin S$ is completely deleted.
    \end{itemize}
    Intuitively, set $S$ indicates whether a set of vertices $V_{b_i}$ is completely deleted or not.
    We set $f^S_b(k') = +\infty$ for all $k' < |U|$ as we cannot delete $U$ with less than $|U|$ deletions.

    We now consider values $k' \ge |U|$.
    Prior to explaining how to compute $f^S_b(k')$ we need the following definitions.
    Let $S' \subseteq [\ell] \setminus S$ such that,
    for each $i \in S'$ and $v \in V_{b_i}$ we have that $N_{G_b - U}(v)\subseteq V_{b_i}$.
    Notice that, by the definition of $S'$, any vertex $v \in V_{b_i}$, $i \in S'$, does not share any edge with any vertex in $V_b - (U \cup V_{b_i})$.
    Let $U' = \bigcup_{i \in S'} V_{b_i}$.
    We also need to consider the connected components of $G_b - (U \cup U')$.
    Let $C_1,\ldots,C_m$, $m \le \ell$, denote the vertex sets of the connected components of $G_b - (U \cup U')$.
    Due to the removal of $U$ and $U'$, we know that for any set $C \in \{ C_1, \ldots, C_m \}$
    there exist at least two integers $i,j \in [\ell] \setminus (S \cup S')$ such that $C \cap V_{b_i} \neq \varnothing$
    and $C \cap V_{b_j} \neq \varnothing$.

    Now, we are ready to explain how we compute the value $f^S_b(k')$, for $k' \ge |U|$.
    We know that $|U|$ of the deletions need to be used in order to delete the vertex set $U$.
    Assume that we know how the deletions are distributed between the modules $V_{b_i}$, $i \in S'$ and $C_j$, $j \in [m]$.
    In particular let $k_i$ be the number of deletions from each module $V_{b_i}$, $i \in S$ and
    $k'_j$ be the number of deletions from each connected component $C_j$, $j \in [m]$.
    Finally, notice that $k' = |U| + \sum_{i \in S'}k_i + \sum_{j\in [m]} k'_j$.

    In order to compute $f^S_b(k')$, we first deal with each graph $G[C_j]$, $j \in [m]$, separately.
    Let $f_j \colon [k] \to \mathbb{N}$ be a function such that, for any $k'_j\in [k]$,
    $f_j(k'_j)$ equals the minimum number of pairs of connected vertices remaining in $G[C_j]$
    after $k'_j$ vertex deletions under the assumption that,
    for any child $b_i$, $i \in [\ell]$ of $b$ such that $V_{b_i} \cap C_j \neq \varnothing$, $V_{b_i}$ is not completely deleted.
    Notice that this assumption comes from the fact that we have previously guessed which modules will be completely deleted
    (set $S$, with the union of the vertex sets of the deleted modules denoted by $U$).
    Notice that, $C_j$ is a connected component of $G_b-(U\cup U')$ so it does not include vertices from any module which will be completely deleted.

    Since for any $j \in [m]$ we have assumed that there is always at least one vertex of each module that remains in $C_j$ after all deletions
    and $G[C_j]$ is connected,
    we have that after the deletions this component remains connected.
    Let $\mu_j$ be the number of modules that intersect $C_j$.
    If $k'_j > |C_j| - \mu_j$ it follows that we cannot respect the assumption that any module that intersects $C_j$ will not be completely deleted,
    therefore we set $f_j(k'_j) = +\infty$ for all $k'_j > |C_j| - \mu_j$.
    For $k'_j \in [0, |C_j| - \mu_j]$, we set $f_j(k'_j) = \binom{|C_j|-k'_j}{2}$;
    indeed, by our assumption, $C_j$ must remain connected after the deletions if $k'_j \in [0, |C_j| - \mu_j]$,
    thus $\binom{|C_j|-k'_j}{2}$ is the exact number of connected pairs remaining in $C_j$ after $k'_j$ deletions.
    Furthermore, if $f_j(k'_j) \neq +\infty$, there exists a deletion set $D \subseteq C_j$ such that $|D| = k'_j$ and
    $G[C_j \setminus D]$ has exactly $f_j(k'_j)$ pairs of connected vertices.

    Now, let us consider pairs of connected vertices in $G_b - U$ that include vertices of $V_{b_i}$, for some $i \in S'$.
    Notice that by the definition of $S'$, both vertices of such pairs belong to $V_{b_i}$.
    Since there are $k_i$ deletions taking place in $V_{b_i}$,
    we can conclude that the minimum number of such pairs of connected vertices after $k_i$ deletions is $f_{b_i}(k_i)$.

    Therefore, we can conclude that
    \[
        f^S_b(k) =
        \min_{k_i \ge 0, i \in S' \text{ and } k'_j \ge 0, j \in [m]}
            \setdef*{ \sum_{i \in S'} f_{b_i}(k_i) +\sum_{j\in [m]} f_j(k'_j)}{k = |U| + \sum_{i \in S'} k_i + \sum_{j\in [m]} k'_j}.
    \]
    Notice that, as far as the value is not infinite, we can guarantee that there exists a solution that agrees with the restrictions implied by $S$.

    To compute the function $f_b$ assume that we have computed the functions $f^S_b$, for all $S \subseteq [\ell]$.
    Then we simply set $f_b (k') = \min_{S\subseteq [\ell]} f^S_b(k')$.
    Since we made all guesses $S \subseteq [\ell]$, we know that we have included the guess that agrees with the optimal solution.
    Let $S^*$ be this guess. Notice that $f^{S^*}_b(k')$ is equal to the minimum number of pairs of connected vertices in $G[V_b]$ after $k'$ deletions.
    Also, for any other $S'$, if $f^{S'}_b(k') \neq +\infty$, we have that $f^{S'}_b(k')$ equals the number of pairs that we can achieve if the deletions agree with $S'$.
    Since this is indeed a solution we have that $f^{S'}_b(k') \ge f^{S^*}_b(k')$. Therefore, we always return the minimum number of pairs.

    \proofsubparagraph{Running time.}
    We now argue about the running time of our algorithm.
    Let $b$ denote a non-leaf node of $T_G$.
    We argue that we can compute $f_b$ in time $2^\mw n^{\bO(1)}$.
    Notice that after fixing the set $S$, we can compute $S'$, the sets $C_1,\ldots C_m$, and the functions $f_i$, for all $i \in [m]$, in polynomial time.
    Also, all $f_{b_i}$, $i \in S$, have been recursively computed in the children nodes of $b$.
    Finally, by \Cref{lemma:minimize_functions}, we know that given the functions $f_{b_i}$, $i \in S'$ and $f_j$, $j \in [m]$,
    we can compute $f^S_b$ in polynomial time.
    Notice that there are at most $2^{\mw}$ functions  $f^S_b$, since $S \subseteq [\ell] \subseteq [\mw]$.
    Since each function is computed in polynomial time, in total we need $2^{\mw} n^{\bO(1)}$ time to compute them all.
    Furthermore, since $f_b (k') = \min_{S \subseteq [\ell]} f^S_b(k')$, we additionally need $2^{\mw} n$ time.
    Finally, observe that, after we compute the function $f_r$ of the root node $r$, we can compute the minimum number of connected pairs after $k$ deletions by checking the value $f_r(k)$.
    Since for every node of the modular decomposition we need $2^{\mw}n^{\bO(1)}$ time in the worst case,
    and the modular decomposition has at most $2n-1$ nodes, the total running time is $2^{\mw}n^{\bO(1)}$.
\end{proof}

\subsection{Clique-width}

\Cref{thm:CNC:fes} implies that {\CNC} is W[1]-hard parameterized by clique-width.
Here we show that for this parameterization, the problem belongs to XP.
To obtain this result, we proceed by presenting a standard dynamic programming algorithm.
On a high-level, as we go over the clique-width expression of the input graph,
we consider a partition of the connected components of our graph based on the labels
appearing in their vertices. For every such partitioning, we keep track
of \emph{both} the total size as well as the total number of connected pairs of vertices
for the connected components belonging to the partitioning. As we prove, this amount of
information is sufficient in order to solve \CNC.

\begin{theorem}\label{thm:cw}
    There is an algorithm that,
    given a graph $G$ along with an irredundant clique-width expression $\psi$ of $G$ of width $\cw$,
    determines for all $k \in [0,n]$ the number of subsets $S \subseteq V(G)$ with $|S| = k$ that minimize
    the number of connected pairs of vertices of $G - S$,
    in time $n^{\bO(2^\cw)}$.
\end{theorem}

\begin{proof}
    Before describing our algorithm we start with some definitions and notations.
    Let $\mathcal{H}$ denote the set of all $\cw$-labeled graphs generated by subexpressions of $\psi$.
    Furthermore, let $\mathcal{L} = 2^{[\cw]} \setminus \varnothing$ denote the set of all non-empty subsets of $[\cw]$.
    We say that $\sigma = (\alpha, \beta)$ is a \emph{signature},
    where $\sigma$ is a tuple consisting of functions $\alpha, \beta \colon \mathcal{L} \to \mathbb{N}$.
    Moreover, let $\Sigma$ denote the set of all possible signatures, where $|\Sigma| = n^{\bO(2^\cw)}$.
    We say that the \emph{label set} of a connected component of a labeled graph is the set of labels appearing in the component's vertices.
    For $H \in \mathcal{H}$ and $L \in \mathcal{L}$, let $\mathcal{C}^L_H \subseteq \cc(H)$ denote the set of connected components of $H$ with label set $L$.
    Lastly, for $H \in \mathcal{H}$, we define the function $\sgn_H \colon 2^{V(H)} \to \Sigma$,
    such that for $S \subseteq V(H)$, the \emph{$H$-signature of $S$} is the tuple $\sgn_H(S) = \sigma = (\alpha, \beta) \in \Sigma$,
    where for all $L \in \mathcal{L}$ it holds that
    \begin{itemize}
        \item $\alpha(L)$ is equal to the sum of the sizes of the connected components of $\mathcal{C}^L_{H-S}$,

        \item $\beta(L)$ is equal to the number of pairs of connected vertices belonging to connected components of $\mathcal{C}^L_{H-S}$.
    \end{itemize}

    Our algorithm proceeds by dynamic programming along $\psi$ in a bottom-up fashion.
    In particular, for every $H \in \mathcal{H}$ it stores a table $\DP_H[\cdot,\cdot]$ where,
    as we show, for $k \in [0,n]$ and $\sigma \in \Sigma$,
    $\DP_H[k,\sigma]$ is equal to the number of subsets of $V(H)$ of size $k$
    and $H$-signature $\sigma$.
    We now proceed to describe how to populate the DP tables,
    as well as establish this invariant by induction.

    \proofsubparagraph{Singleton $H = i(v)$.}
    For $H = i(v)$, notice that $\lab^{-1}_H(i) = \{v\}$ and $\lab^{-1}_H(w) = \varnothing$ for all $w \in [\cw] \setminus \{i\}$.
    Consequently, for $k \in [0,n]$ and signature $\sigma = (\alpha, \beta) \in \Sigma$ we set
    \begin{equation}\label{eq:cw:singleton}
        \DP_H[k,\sigma] =
        \begin{cases}
            1   &\text{if $k=0$, $\alpha \equiv \alpha_0[\{i\} \mapsto 1]$, and $\beta \equiv \beta_0$},\\
            1   &\text{if $k=1$, $\alpha \equiv \alpha_0$, and $\beta \equiv \beta_0$},\\
            0   &\text{otherwise},
        \end{cases}
    \end{equation}
    where for all $L \in \mathcal{L}$,  $\alpha_0(L) = \beta_0(L) = 0$.
    By \Cref{eq:cw:singleton} we can fill the table $\DP_H[\cdot,\cdot]$ in time $n^{\bO(2^\cw)}$.
    Furthermore, notice that $2^{V(H)} = \{ \varnothing, \{v\} \}$ and
    one can easily verify that the invariant holds.

    \proofsubparagraph{Disjoint union, $H = H_1 \oplus H_2$.}
    We define a function $h \colon \Sigma \times \Sigma \to \Sigma$
    such that, for $\sigma_1 = (\alpha_1, \beta_1)$ and $\sigma_2 = (\alpha_2, \beta_2)$,
    $h(\sigma_1,\sigma_2) = \sigma = (\alpha,\beta)$
    where for all $L \in \mathcal{L}$ it holds that
    \begin{itemize}
        \item $\alpha(L) = \alpha_1(L) + \alpha_2(L)$,
        \item $\beta(L) = \beta_1(L) + \beta_2(L)$.
    \end{itemize}
    For a fixed size $k \in [0,n]$ and signature $\sigma \in \Sigma$,
    we set the value of $\DP_H[k,\sigma]$ to be
    \begin{equation}\label{eq:cw:union}
        \DP_H[k,\sigma] = \sum_{k_1 + k_2 = k} \, \sum_{(\sigma_1, \sigma_2) \in h^{-1}(\sigma)} \DP_{H_1}[k_1,\sigma_1] \cdot \DP_{H_2}[k_2,\sigma_2].
    \end{equation}
    If $h^{-1}(\sigma) = \varnothing$ we set $\DP_H[k,\sigma] = 0$.
    Notice that computing $h^{-1}(\sigma)$ requires $n^{\bO(2^\cw)}$ time,
    thus by \cref{eq:cw:union} we can fill the table $\DP_H[\cdot,\cdot]$ in time $n^{\bO(2^\cw)}$.

    \begin{claim}\label{claim:cw:union}
        Let $H = H_1 \oplus H_2$.
        Assume that for $H_i \in \{ H_1, H_2 \}$, $k \in [0,n]$, and $\sigma \in \Sigma$,
        $\DP_{H_i}[k,\sigma]$ is equal to the number of subsets of
        $V(H_i)$ of size $k$ and $H_i$-signature $\sigma$.
        Then it holds that $\DP_H[k,\sigma]$ is equal to the number of subsets of
        $V(H)$ of size $k$ and $H$-signature $\sigma$.
    \end{claim}
    
    \begin{claimproof}
        Observe that $V(H) = V(H_1) \cup V(H_2)$ and $E(H) = E(H_1) \cup E(H_2)$.
        Thus, for every $S \subseteq V(H)$, it holds that $S = S_1 \cup S_2$ where $S_i = S \cap V(H_i)$ for $i \in \{1,2\}$.
        Furthermore, notice that the connected components of $H - S$ are exactly the connected components of $H_1 - S_1$ and $H_2 - S_2$,
        each with the same label set.
        Consequently, for every $L \in \mathcal{L}$ it holds that
        \begin{itemize}
            \item $\alpha(L) = \alpha_1(L) + \alpha_2(L)$,
            \item $\beta(L) = \beta_1(L) + \beta_2(L)$,
        \end{itemize}
        where $\sigma = (\alpha,\beta)$ is the $H$-signature of $S$,
        and $\sigma_i = (\alpha_i,\beta_i)$ is the $H_i$-signature of $S_i$ for $i \in \{1,2\}$.
        This implies that $\sigma = h(\sigma_1,\sigma_2)$.
        The claim follows by \cref{eq:cw:union} and by the fact that there are $\DP_{H_1}[k_1,\sigma_1]$ ways to choose a subset of size $k_1$
        with $H_1$-signature $\sigma_1$, and $\DP_{H_2}[k_2,\sigma_2]$ ways to choose a subset of size $k_2$
        with $H_2$-signature $\sigma_2$, for every pair $(\sigma_1,\sigma_2) \in h^{-1}(\sigma)$ and every pair $(k_1,k_2)$ such that $k = k_1 + k_2$.
    \end{claimproof}

    \proofsubparagraph{Relabeling, $H = \rho_{i \to j}(H')$.}
    Observe that a connected component with label set $L \in \mathcal{L}$ in $H$ has label set either $L$, $L \cup \{i\}$, or $L \cup \{i\} \setminus \{j\}$ in $H'$.
    For each pair of distinct $i,j \in [\cw]$
    we define a function $f_{i \to j} \colon \Sigma \to \Sigma$
    such that, for $\sigma' = (\alpha', \beta')$, $f_{i \to j}(\sigma') = \sigma = (\alpha,\beta)$
    where for all $L \in \mathcal{L}$ it holds that
    \begin{itemize}
        \item if $i \in L$, then $\alpha(L) = \beta(L) = 0$,
        \item if $i,j \notin L$, then $\alpha(L) = \alpha'(L)$ and $\beta(L) = \beta'(L)$,
        \item if $i \notin L$ and $j \in L$, then it holds that
        \begin{itemize}
            \item $\alpha(L) = \alpha'(L) + \alpha'(L \cup \{i\}) + \alpha'(L \cup \{i\} \setminus \{j\})$,
            \item $\beta(L) = \beta'(L) + \beta'(L \cup \{i\}) + \beta'(L \cup \{i\} \setminus \{j\})$.
        \end{itemize}
    \end{itemize}
    For a fixed size $k \in [0,n]$ and signature $\sigma \in \Sigma$,
    we set the value of $\DP_H[k,\sigma]$ to be
    \begin{equation}\label{eq:cw:relabel}
        \DP_H[k,\sigma] = \sum_{\sigma' \in f^{-1}_{i \to j}(\sigma)} \DP_{H'}[k,\sigma'].
    \end{equation}
    If $f^{-1}_{i \to j}(\sigma) = \varnothing$ we set $\DP_H[k,\sigma] = 0$.
    Notice that computing $f^{-1}_{i \to j}(\sigma)$ requires $n^{\bO(2^\cw)}$ time,
    thus by \cref{eq:cw:relabel} we can fill the table $\DP_H[\cdot,\cdot]$ in time $n^{\bO(2^\cw)}$.

    \begin{claim}\label{claim:cw:relabel}
        Let $H = \rho_{i \to j}(H')$.
        Assume that for $k \in [0,n]$ and $\sigma \in \Sigma$,
        $\DP_{H'}[k,\sigma]$ is equal to the number of subsets of
        $V(H')$ of size $k$ and $H'$-signature $\sigma$.
        Then it holds that $\DP_H[k,\sigma]$ is equal to the number of subsets of
        $V(H)$ of size $k$ and $H$-signature $\sigma$.
    \end{claim}

    \begin{claimproof}
        Observe that $V(H) = V(H')$ and $E(H) = E(H')$.
        Furthermore, notice that for all $S \subseteq V(H)$, the connected components of $H - S$ are exactly the connected components of $H' - S$,
        albeit with potentially different label sets.
        Let $S \subseteq V(H)$ such that $\sgn_H(S) = \sigma = (\alpha,\beta)$ and
        $\sgn_{H'}(S) = \sigma' = (\alpha',\beta')$.
        We argue that $f_{i \to j}(\sigma') = \sigma$.
        Given that each subset $S$ has a unique $H'$-signature,
        \cref{eq:cw:relabel} and the fact that there are $\DP_{H'}[k,\sigma']$ ways to choose a subset of size $k$
        with $H'$-signature $\sigma'$,
        this implies the claim.

        Let $L \in \mathcal{L}$.
        If $i \in L$, then since $\lab^{-1}_H(i) = \varnothing$,
        it holds that no connected component of $H - S$ has label set $L$,
        thus $\alpha(L) = \beta(L) = 0$.
        If $i,j \notin L$, then the connected components of $H - S$ with label set $L$ are exactly the connected components of $H' - S$ with label set $L$,
        thus $\alpha(L) = \alpha'(L)$ and $\beta(L) = \beta'(L)$.
        Lastly, if $i \notin L$ and $j \in L$, then
        the connected components of $H - S$ with label set $L$ are exactly the
        connected components of $H' - S$ with label set $L$, $L \cup \{i\}$, and $L \cup \{i\} \setminus \{j\}$.
        Thus, $\alpha(L) = \alpha'(L) + \alpha'(L \cup \{i\}) + \alpha'(L \cup \{i\} \setminus \{j\})$ and
        $\beta(L) = \beta'(L) + \beta'(L \cup \{i\}) + \beta'(L \cup \{j\} \setminus \{j\})$.
        It follows that $f_{i \to j}(\sigma') = \sigma$,
        and this concludes the proof.
    \end{claimproof}

    \proofsubparagraph{Joining labels with edges, $H = \eta_{i,j}(H')$.}
    For each pair of distinct $i,j \in [\cw]$
    we define a function $g_{i,j} \colon \Sigma \to \Sigma$ as follows.
    Let for $\sigma' = (\alpha', \beta')$, $g_{i,j}(\sigma') = \sigma = (\alpha,\beta)$.
    We consider two cases.
    First, assume that either $\sum_{L \in \mathcal{L}_i} \alpha'(L) = 0$ or $\sum_{L \in \mathcal{L}_j} \alpha'(L) = 0$,
    where for $z \in \{i,j\}$, $\mathcal{L}_z = \setdef{L \in \mathcal{L}}{z \in L}$ denotes the label sets containing label $z$.
    In that case, we set $\sigma = \sigma'$.
    Otherwise, let $\mathcal{L}^{\sigma'}_{i,j} = \setdef{L \in \mathcal{L}}{\{i,j\} \cap L \neq \varnothing \text{ and } \alpha'(L) \neq 0}$
    and $L^{\sigma'}_{i,j} = \bigcup_{L \in \mathcal{L}^{\sigma'}_{i,j}} L$.
    For all $L \in \mathcal{L}$ we set the values of $\alpha(L)$ and $\beta(L)$ as follows:
    \begin{itemize}
        \item if $i,j \notin L$, then $\alpha(L) = \alpha'(L)$ and $\beta(L) = \beta'(L)$,
        \item if $\{ i,j \} \cap L \neq \varnothing$ and $L \neq L^{\sigma'}_{i,j}$, then $\alpha(L) = \beta(L) = 0$,
        \item if $L = L^{\sigma'}_{i,j}$, then $\alpha(L) = \sum_{L \in \mathcal{L}^{\sigma'}_{i,j}} \alpha'(L)$ and
            $\beta(L) = \binom{\alpha(L)}{2}$.
    \end{itemize}
    For a fixed size $k \in [0,n]$ and signature $\sigma \in \Sigma$,
    we set the value of $\DP_H[k,\sigma]$ to be
    \begin{equation}\label{eq:cw:join}
        \DP_H[k,\sigma] = \sum_{\sigma' \in g^{-1}_{i,j}(\sigma)} \DP_{H'}[k,\sigma'].
    \end{equation}
    If $g^{-1}_{i,j}(\sigma) = \varnothing$ we set $\DP_H[k,\sigma] = 0$.
    Notice that computing $g^{-1}_{i,j}(\sigma)$ requires $n^{\bO(2^\cw)}$ time,
    thus by \cref{eq:cw:join} we can fill the table $\DP_H[\cdot,\cdot]$ in time $n^{\bO(2^\cw)}$.

    \begin{claim}\label{claim:cw:join}
        Let $H = \eta_{i,j}(H')$.
        Assume that for $k \in [0,n]$ and $\sigma \in \Sigma$,
        $\DP_{H'}[k,\sigma]$ is equal to the number of subsets of
        $V(H')$ of size $k$ and $H'$-signature $\sigma$.
        Then it holds that $\DP_H[k,\sigma]$ is equal to the number of subsets of
        $V(H)$ of size $k$ and $H$-signature $\sigma$.
    \end{claim}

    \begin{claimproof}
        Observe that $V(H) = V(H')$, each vertex has the same label in both $H$ and $H'$,
        and $E(H) = E(H') \cup \setdef{uv}{u \in \lab^{-1}_{H'}(i), v \in \lab^{-1}_{H'}(j)}$.
        Let $S \subseteq V(H)$ such that $\sgn_H(S) = \sigma = (\alpha,\beta)$ and
        $\sgn_{H'}(S) = \sigma' = (\alpha',\beta')$.
        We argue that $g_{i,j}(\sigma') = \sigma$.
        Given that each subset $S$ has a unique $H'$-signature,
        \cref{eq:cw:join} and the fact that there are $\DP_{H'}[k,\sigma']$ ways to choose a subset of size $k$
        with $H'$-signature $\sigma'$,
        this implies the claim.
        Let $L \in \mathcal{L}$.
        We will consider multiple cases.

        First, assume that either $\sum_{L \in \mathcal{L}_i} \alpha'(L) = 0$ or $\sum_{L \in \mathcal{L}_j} \alpha'(L) = 0$,
        where for $z \in \{i,j\}$, $\mathcal{L}_z = \setdef{L \in \mathcal{L}}{z \in L}$ denotes the label sets containing label $z$.
        In that case, it follows that either $\lab^{-1}_{H'}(i) \subseteq S$ or $\lab^{-1}_{H'}(j) \subseteq S$.
        Consequently, no new edges are added in $H - S$ compared to $H' - S$,
        thus, for all $L \in \mathcal{L}$, it holds that $\alpha(L) = \alpha'(L)$ and $\beta(L) = \beta'(L)$,
        which implies that $\sigma = \sigma'$.

        Alternatively it holds that $\sum_{L \in \mathcal{L}_i} \alpha'(L) \neq 0$ and $\sum_{L \in \mathcal{L}_j} \alpha'(L) \neq 0$.
        If $i,j \notin L$, then the connected components of $H - S$ with label set $L$ are exactly the connected components of $H' - S$ with label set $L$,
        thus $\alpha(L) = \alpha'(L)$ and $\beta(L) = \beta'(L)$ follows.
        To handle the remaining cases, we let $\mathcal{L}^{\sigma'}_{i,j} = \setdef{L \in \mathcal{L}}{\{i,j\} \cap L \neq \varnothing \text{ and } \alpha'(L) \neq 0}$
        and $L^{\sigma'}_{i,j} = \bigcup_{L \in \mathcal{L}^{\sigma'}_{i,j}} L$.
        Since there are vertices $u$ and $v$ in $H' - S$ with labels $i$ and $j$, respectively,
        it follows that in $H-S$, all vertices of label $i$ and all vertices of label $j$ belong to the same
        connected component.
        Furthermore, the label set of that connected component is exactly $L^{\sigma'}_{i,j}$,
        as it contains all labels of connected components of $H' - S$ with label set in $\mathcal{L}^{\sigma'}_{i,j}$.
        As for its size, it holds that $\alpha(L^{\sigma'}_{i,j}) = \sum_{L \in \mathcal{L}^{\sigma'}_{i,j}} \alpha'(L)$,
        from which we can infer the number of pairs of connected vertices $\beta(L^{\sigma'}_{i,j}) = \binom{\alpha(L^{\sigma'}_{i,j})}{2}$.
        For any other label set $L \in \mathcal{L}$ with $\{i,j\} \cap L \neq \varnothing$ and $L \neq L^{\sigma'}_{i,j}$,
        it holds that $\alpha(L) = \beta(L) = 0$ as either (i) $\alpha'(L) = \beta'(L) = 0$,
        or (ii) the connected components of $\mathcal{C}^L_{H'-S}$ are all connected in $H-S$ and are part of the single connected component
        of label set $L^{\sigma'}_{i,j}$ in $H-S$.
        It follows that $g_{i,j}(\sigma') = \sigma$,
        and this concludes the proof.
    \end{claimproof}
    Correctness follows by induction and \Cref{claim:cw:union,claim:cw:relabel,claim:cw:join}.
    As for the running time, notice that for every $H \in \mathcal{H}$
    the table $\DP_H[\cdot,\cdot]$ is filled in time $n^{\bO(2^\cw)}$.
    Since $|\mathcal{H}| = n^{\bO(1)}$,
    the overall running time of our algorithm is $n^{\bO(2^\cw)}$.
    Finally, notice that for any $S \subseteq V(G)$ of size $k \in [0,n]$ and $G$-signature $\sigma = (\alpha,\beta)$,
    it holds that $\pairs(G-S) = \sum_{L \in \mathcal{L}} \beta(L)$.
    Consequently, by iterating over all signatures $\sigma \in \Sigma$,
    we can determine the minimum value of $\pairs(G-S)$ over all subsets $S \subseteq V(G)$ of size $k$,
    as well as the number of such subsets achieving this minimum.
    This concludes the proof.
\end{proof}

\section{FPT Approximation Scheme}\label{sec:fpt-as}
Given the fact that, as evidenced by \Cref{thm:CNC:fes}, {\CNC} remains W[1]-hard even under severe structural parameterizations,
in this section we aim to bypass this computational hardness by adding approximation into the mix.
In particular, we design an efficient FPT-AS for the parameterization by treewidth by modifying the standard
$n^{\bO(\tw)}$ DP algorithm~\cite{dam/AddisSG13}
and making use of a technique introduced by Lampis~\cite{icalp/Lampis14}.

\begin{theoremrep}[\appsymb]\label{thm:fpt-as}
    There is an algorithm which, for all $\varepsilon > 0$,
    when given as input a graph $G$ of treewidth $\tw$
    returns in time $(\tw / \varepsilon)^{\bO(\tw)} n^{\bO(1)}$
    a set $S \subseteq V(G)$ of size at most $k$ such that
    $\pairs(G-S) \le (1+\varepsilon) \cdot \pairs(G-S')$ for all $S' \subseteq V(G)$ of size $|S'| \le k$,
    for all $k \le n$.
\end{theoremrep}

\begin{proof}
    We first describe the main idea behind our algorithm.
    On a high level, we aim to develop a DP which, while traversing the tree decomposition,
    keeps track of the sizes of any \emph{active} components (those whose vertices intersect the bag),
    while for the rest of the components (i.e., the \emph{inactive} ones)
    there exists a variable on which we account for their number of pairs of connected vertices.
    To this end, assuming that the exact size of an active component is $c$, our DP stores a value $\hat{c} \le (1+\varepsilon') \cdot c$
    where $\varepsilon'$ is such that $\binom{\hat{c}}{2} \le (1+\varepsilon) \cdot \binom{c}{2}$.
    Notice that for this to hold, $c=1$ implies that $\hat{c}=1$.

    \begin{lemma}\label{lemma:fpt-as:varepsilon}
        Let $0 < \varepsilon < 1$ and $c \ge 2$.
        For $\varepsilon' = \varepsilon/4$ it holds that
        if $\hat{c} \le (1 + \varepsilon') \cdot c$,
        then $\binom{\hat{c}}{2} \le (1+\varepsilon) \cdot \binom{c}{2}$.
    \end{lemma}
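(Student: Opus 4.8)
The plan is to reduce the statement to a single elementary inequality and verify it by direct computation. Writing $\binom{t}{2} = t(t-1)/2$ for real $t$, the conclusion $\binom{\hat c}{2} \le (1+\varepsilon)\binom{c}{2}$ is equivalent to $\hat c(\hat c-1) \le (1+\varepsilon)\,c(c-1)$. Since $\hat c$ records the size of a nonempty active component we may assume $\hat c \ge 1$ (and in any case, if $0 \le \hat c < 1$ the left-hand side is non-positive while the right-hand side is at least $1+\varepsilon > 0$ because $c \ge 2$, so the claim is immediate). On $[1,\infty)$ the function $t \mapsto t(t-1)$ is non-decreasing, so using $\hat c \le (1+\varepsilon')c$ it suffices to prove the inequality at the extreme value $\hat c = (1+\varepsilon')c$, that is, to establish
\[
  (1+\varepsilon')c\bigl((1+\varepsilon')c - 1\bigr) \;\le\; (1+\varepsilon)\,c(c-1).
\]

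Expanding both sides and cancelling the common factor $c > 0$ turns this into $(1+\varepsilon')^2 c - (1+\varepsilon') \le (1+\varepsilon)c - (1+\varepsilon)$, that is,
\[
  c\bigl((1+\varepsilon')^2 - (1+\varepsilon)\bigr) \;\le\; \varepsilon' - \varepsilon .
\]
Now substitute $\varepsilon' = \varepsilon/4$: one has $(1+\varepsilon')^2 - (1+\varepsilon) = \varepsilon^2/16 - \varepsilon/2$ and $\varepsilon' - \varepsilon = -3\varepsilon/4$, so after dividing through by $\varepsilon > 0$ and multiplying by $-1$ (which reverses the inequality) the claim becomes $c\bigl(1/2 - \varepsilon/16\bigr) \ge 3/4$. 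For $0 < \varepsilon < 1$ the coefficient $1/2 - \varepsilon/16$ is positive, so for $c \ge 2$ the left-hand side is at least $2(1/2 - \varepsilon/16) = 1 - \varepsilon/8 > 7/8 > 3/4$, which completes the argument.

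There is no genuine obstacle here; the only thing worth checking is that the particular choice $\varepsilon' = \varepsilon/4$ is actually large enough, and the computation above shows it is --- indeed with slack, so a somewhat coarser rounding would still work, though $\varepsilon/4$ keeps the bookkeeping clean and does not affect the $(\tw/\varepsilon)^{\bO(\tw)}$ running time. I would also record explicitly, as the surrounding discussion already does, that $c = 1$ is deliberately excluded: a size-$1$ component must be stored exactly as $\hat c = 1$, and the lemma is only ever invoked on active components of size at least $2$.
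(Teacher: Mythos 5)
Your proof is correct and follows essentially the same route as the paper's: reduce to the inequality $c\bigl((1+\varepsilon')^2 - (1+\varepsilon)\bigr) \le \varepsilon' - \varepsilon$ and verify it for $\varepsilon' = \varepsilon/4$, $c \ge 2$, $\varepsilon < 1$ (the paper merely phrases this with $\varepsilon' = \varepsilon/x$ and checks that $x \ge 4$ suffices). Your explicit treatment of the monotonicity step reducing $\hat{c} \le (1+\varepsilon')c$ to the extreme value $\hat{c} = (1+\varepsilon')c$ is a small but welcome addition that the paper leaves implicit.
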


    \begin{nestedproof}
        It suffices to show that
        \begin{align*}
            \frac{(1 + \varepsilon') \cdot c \cdot ((1 + \varepsilon') \cdot c - 1)}{2} &\le
            \frac{(1 + \varepsilon) \cdot c \cdot (c - 1)}{2} \iff\\
            (1 + \varepsilon')^2 \cdot c - (1 + \varepsilon') &\le (1 + \varepsilon) \cdot c - (1 + \varepsilon) \iff\\
            c \cdot (\varepsilon - 2 \varepsilon' - (\varepsilon')^2) &\ge \varepsilon - \varepsilon'.
        \end{align*}

        Let $\varepsilon' = \varepsilon / x$ for some $x > 1$.
        In that case multiplying everything with $x^2 > 0$ in the previous inequality gives
        \begin{align*}
            c \cdot (x^2 \cdot \varepsilon - 2x \cdot \varepsilon - \varepsilon^2) &\ge x^2 \cdot \varepsilon - x \cdot \varepsilon \iff\\
            c \cdot (x^2 - 2x - \varepsilon) &\ge x^2 - x.
        \end{align*}

        Since $\varepsilon < 1$, it holds that $x^2 - 2x - \varepsilon > x^2 - 2x - 1$,
        thus, since $c \ge 0$, it suffices to show that $c \cdot (x^2 - 2x - 1) \ge x^2 - x$
        where $x^2 - 2x - 1 > 0$ for all $x \ge 3$.
        In that case, assuming that $x \ge 3$, it suffices to show that
        \[
            c \ge \frac{x^2 - x}{x^2 - 2x - 1}.
        \]
        Since $c \ge 2$,
        it suffices to have that $\frac{x^2 - x}{x^2 - 2x - 1} \le 2$,
        which is true for all $x \ge 4$.
    \end{nestedproof}

    Consequently, it suffices to present a dynamic program which correctly stores the size of any singleton active component,
    while for the rest of active components it allows for $(1 + \varepsilon')$-approximate values on their sizes.
    In that case, the number of pairs of connected vertices accounted for every connected component is a $(1+\varepsilon)$-approximation,
    and since we sum over those, the final value has a $(1+\varepsilon)$-approximation ratio as well.
    In the rest of the proof we present a DP that does exactly as required.

    \proofsubparagraph{Definitions and Notation.}
    We assume familiarity with the definition and usual notation for treewidth (see e.g.~\cite[Section~7]{books/CyganFKLMPPS15}).
    Let $\mathcal{T}$ denote the nice tree decomposition and $X_t$ denote the bag of node $t$, while $h_t$ denotes its height.
    We denote by $\mathcal{T}_t$ the subtree of $\mathcal{T}$ rooted at $t$,
    and by $G[\mathcal{T}_t]$ the subgraph due to the subtree $\mathcal{T}_t$ of the tree decomposition.
    For a node $t$, we denote by $X^{\downarrow}_t$ the union of the bags of the nodes of $\mathcal{T}_t$.
    %
    %
    A \emph{partition} $P$ of a set $L$ is a collection of non-empty subsets of $L$
    that are pairwise non-intersecting and such that $\bigcup_{p \in P} p = L$;
    each set in $P$ is called a \emph{block} of $P$.
    The set of partitions of a finite set $L$ is denoted by $\Pi(L)$,
    and $(\Pi(L),\sqsubseteq)$ forms a lattice where $P \sqsubseteq Q$
    if for each block $p$ of $P$ there is a block $q$ of $Q$ with $p \subseteq q$.
    The join operation of this lattice is denoted by $\sqcup$.
    For example, we have $\{ \{ 1,2\},\{3,4\},\{5\}\} \sqcup \{\{1\},\{2,3\},\{4\},\{5\}\}= \{ \{ 1,2,3,4\}, \{5\}\}$.
    Let $|P|$ denote the number of blocks of a partition $P$.
    For $P \in \Pi(L)$ and $X \subseteq L$, let $P_{\downarrow X} \in \Pi(X)$ be the partition
    $\setdef{p \cap X}{p \in P} \setminus \{\varnothing\}$,
    and for a set $Y$, let $P_{\uparrow Y} \in \Pi(L \cup Y)$ be the partition
    $P \cup \left(\bigcup_{y \in Y \setminus L} \{\{y\}\}\right)$.

    We are now ready to describe our algorithm.
    First we compute a nice tree decomposition of $G$ of width at most $2 \tw + 1$
    and at most $\bO(n \tw)$ bags, which can be computed in time $2^{\bO(\tw)} \cdot n$~\cite{focs/Korhonen21}.
    Next we apply a result of Bodlaender and Hagerup~\cite[Lemma~2.2]{siamcomp/BodlaenderH98} to said nice tree decomposition,
    which allows us to edit it into a tree decomposition of height $\bO(\log n)$, while the width remains $\bO(\tw)$;
    we further edit it so that it is \emph{nice}, in which case the height becomes $h = \bO(\tw \log n)$.
    Let $\delta  = \varepsilon' / (2h)$ and define $B = \{ 0 \} \cup \setdef{(1+\delta)^j}{j \in \mathbb{N}_0 \text{ and } (1+\delta)^j \le (1+\varepsilon') \cdot n}$.
    Informally, $B$ is the set of rounded values that are used to approximate the sizes of the active components.

    Our algorithm proceeds by dynamic programming along the nodes of $\mathcal{T}$ in a bottom-up fashion.
    For every node $t$ we define the \emph{pseudo-signature space} $\hat{\Sigma}_t$ which contains tuples $(Z, k, P, \hat{c})$
    where $Z \subseteq X_t$, $k \in [0,n]$,
    $P \in \Pi(Z)$ defines a partition of the vertices of $Z$,
    and $\hat{c} \colon P \to B$ is a function from the blocks of $P$ to $B$.
    We say that a set $S \subseteq X^{\downarrow}_t$ has \emph{pseudo-signature} (in node $t$) $\hat{\sigma}_t = (Z, k, P, \hat{c}) \in \hat{\Sigma}_t$
    if and only if
    \begin{romanenumerate}
        \item $Z = X_t \setminus S$,

        \item $|S \setminus X_t| = k$,

        \item $P$ describes the partition of the vertices of $Z$ into connected components of $G[\mathcal{T}_t] - S$,

        \item for all $p \in P$, $\hat{c}(p)$ is the minimum element of $B$ that upper-bounds the size of
        the component of $G[\mathcal{T}_t] - S$ that contains the vertices of $p$, without accounting for the vertices in $X_t$.
    \end{romanenumerate}
    In an analogous way, we define the \emph{signature space} $\Sigma_t$ and the
    \emph{signature} $\sigma_t = (Z,k,P,c)$ of a set $S \subseteq X^{\downarrow}_t$ (in node $t$),
    with the only difference being that $c \colon P \to [0,n]$ stores the \emph{exact} size of the active components,
    without accounting for the vertices in $X_t$.
    We further define for all $t \in V(\mathcal{T})$ the functions $\sgn_t \colon 2^{X^{\downarrow}_t} \to \Sigma_t$
    and $\hat{\sgn}_t \colon 2^{X^{\downarrow}_t} \to \hat{\Sigma}_t$ that map a set $S \subseteq X^{\downarrow}_t$ to
    its signature and pseudo-signature in node $t$ respectively.
    For a node $t$, we say that $\sigma_t$ and $\hat{\sigma}_t$ are \emph{compatible} when they only differ in the last entry of the tuple,
    that is, the only difference is between $c$ and $\hat{c}$.
    We say that a pair $(\sigma_t, d)$ is an \emph{exact solution} if there exists a set $S \subseteq X^{\downarrow}_t$,
    which we say that \emph{realizes} the exact solution,
    such that $\sgn_t(S) = \sigma_t$,
    and the number of pairs of connected vertices in the \emph{inactive} components of $G[\mathcal{T}_t] - S$ is equal to $d$.
    Analogously, we say that a pair $(\hat{\sigma}_t, \hat{d})$ is an \emph{approximate solution} if there exists a set $S \subseteq X^{\downarrow}_t$,
    which we say that \emph{realizes} it,
    such that $\hat{\sgn}_t(S) = \hat{\sigma}_t$, and the number of pairs of connected vertices in the \emph{inactive} components of $G[\mathcal{T}_t] - S$
    is equal to $d$, with $d \le \hat{d} \le (1+\varepsilon) \cdot d$.

    For every node $t \in V(\mathcal{T})$ the algorithm stores a table
    $\DP_t[\cdot, \cdot]$ where, for $\hat{\sigma}_t \in \hat{\Sigma}_t$ and $\hat{d} \in [0, (1+\varepsilon) \cdot \binom{n}{2}]$,
    $\DP_t[\hat{\sigma}_t, \hat{d}]$ is a boolean variable indicating whether there exists an approximate solution $(\hat{\sigma}_t, \hat{d})$.
    Recall that the height $h_t$ of a node $t$ of the decomposition is the largest distance from the node to a leaf rooted in its sub-tree.
    Leaves are at height $0$ and the root is at height $h$.
    In order to show that the algorithm indeed produces a $(1+\varepsilon)$-approximate solution, we want to maintain the following invariant:
    \begin{itemize}
        \item For each node $t$, if there exists an exact solution $(\sigma_t = (Z, k, P, c), d)$,
        then there exists a pseudo-signature $\hat{\sigma}_t = (Z, k, P, \hat{c}) \in \hat{\Sigma}_t$ compatible with $\sigma_t$ such that
        for all $p \in P$, $c(p) \le \hat{c}(p) \le (1+\delta)^{h_t} \cdot c(p)$ and a value $\hat{d}$ with $d \le \hat{d} \le (1+\varepsilon) \cdot d$,
        such that $\DP_t[\hat{\sigma}_t, \hat{d}] = \true$.

        \item For each node $t$, if $\DP_t[\hat{\sigma}_t, \hat{d}] = \true$, where $\hat{\sigma}_t = (Z, k, P, \hat{c})$,
        then there exists an exact solution $(\sigma_t = (Z, k, P, c), d)$ where $\sigma_t$ is compatible with $\hat{\sigma}_t$,
        $d \le \hat{d}$,
        and for all $p \in P$, $c(p) \le \hat{c}(p)$.
    \end{itemize}
    On a high-level, our approximation algorithm slightly overestimates both the size of the active components (by a factor of $(1+\delta)^{h_t}$)
    and the number of pairs of connected vertices in inactive components of the graph (by a factor of $(1+\varepsilon)$):
    on the one hand, if the approximate algorithm computes some solution, an exact solution that is at least that good exists;
    on the other hand, if an exact solution exists, the algorithm will compute one that satisfies the approximation guarantees.
    We now proceed to describe how to populate the DP tables, as well as establish this invariant by induction.

    \proofsubparagraph{Leaf node.}
    If $t$ is a leaf node, then $X_t = \varnothing$ and $\hat{\Sigma}_t = \{ (\varnothing, 0, \{ \{ \varnothing\} \}, [\{ \varnothing\} \mapsto 0] ) \}$.
    In that case, it follows that
    \[
        \DP_t[\hat{\sigma}_t, \hat{d}] =
            \begin{cases}
                \true   &\text{if $\hat{\sigma}_t \in \hat{\Sigma}_t$ and $\hat{d} = 0$,}\\
                \false  &\text{otherwise}.
            \end{cases}
    \]
    Notice that the invariant trivially holds for Leaf nodes.

    \proofsubparagraph{Introduce node.}
    Suppose $t$ is an introduce node with child $t'$ such that $X_t = X_{t'} \cup \{ v \}$ for some $v \notin X_{t'}$.
    Let $\hat{\sigma}_t = (Z, k, P, \hat{c}) \in \hat{\Sigma}_t$ and consider two cases.
    If $v \notin Z$, then we set
    $\DP_t[\hat{\sigma}_t, \hat{d}] = \DP_{t'}[\hat{\sigma}_t, \hat{d}]$ for all $\hat{d} \in [0, (1+\varepsilon) \cdot \binom{n}{2}]$.
    Otherwise it holds that $v \in Z$ and we set
    \[
        \DP_t[\hat{\sigma}_t, \hat{d}] = \bigvee_{\hat{\sigma}_{t'} \in \mathcal{S}^I_{\hat{\sigma}_t}} \DP_{t'}[\hat{\sigma}_{t'}, \hat{d}],
    \]
    where $\mathcal{S}^I_{\hat{\sigma}_t} \subseteq \hat{\Sigma}_{t'}$ such that
    $\hat{\sigma}_{t'} = (Z', k', P', \hat{c'}) \in \mathcal{S}^I_{\hat{\sigma}_t}$ if
    \begin{romanenumerate}
        \item $Z = Z' \cup \{ v \}$,
        \item $k = k'$,
        \item $P'$ is a partition of $Z'$ such that merging all its blocks that contain vertices incident with $v$
        and adding $v$ to the latter results to the partition $P$ of $Z$,
        that is, formally we have $P = P'_{\uparrow\{v\}} \sqcup \{ \{ v \} \cup \setdef{u \in Z'}{u \in N_G(v)} \}_{\uparrow Z}$,
        \item $\hat{c'}$ is a cost function on the blocks of $P'$ such that,
        for $\mathcal{P}' = \setdef{p' \in P'}{p' \cap N_G(v) \neq \varnothing}$,%
        \footnote{That is, $\mathcal{P}'$ contains all the blocks of $P'$ that contain vertices incident with $v$.}
        we have that
        (i) $\hat{c}(p) = \hat{c'}(p)$ for all $p \in P' \setminus \mathcal{P}'$,
        and (ii) $\hat{c}(p_v)$ is equal to the smallest value in $B$ that upper-bounds
        $\sum_{p' \in \mathcal{P}'} \hat{c'}(p')$, where $p_v \in P$ such that $v \in p_v$.
    \end{romanenumerate}

    \begin{lemma}\label{lemma:fpt-as:introduce}
        Let $t$ be an introduce node, with $t'$ denoting its child node.
        Assume that the invariant holds for node $t'$.
        Then it also holds for node $t$.
    \end{lemma}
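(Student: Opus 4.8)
The plan is to establish both halves of the invariant at the introduce node $t$, assuming they hold at its child $t'$; recall that in the balanced decomposition $t$ sits one level above $t'$, so $h_t = h_{t'}+1$, and that the recurrence for an introduce node branches on whether $v \in Z$. I would dispose of the case $v \notin Z$ first. Here every $S \subseteq X^{\downarrow}_t$ compatible with $\hat{\sigma}_t$ deletes $v$, and since $X_t \setminus \{v\} = X_{t'}$ and $G[\mathcal{T}_t]-S = G[\mathcal{T}_{t'}] - (S \setminus \{v\})$, all of $Z$, $k$, $P$, $c$ and the number $d$ of pairs in inactive components agree with their counterparts at $t'$, the map $S \mapsto S \setminus \{v\}$ being a bijection between the relevant sets. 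As $\DP_t[\hat{\sigma}_t,\hat{d}] = \DP_{t'}[\hat{\sigma}_t,\hat{d}]$ in this case, both halves transfer immediately, with $(1+\delta)^{h_t} \ge (1+\delta)^{h_{t'}}$ loosening the first half.

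For the case $v \in Z$ I would first record the structural facts that do the work. Since $v$ is kept, any realizing set satisfies $S \subseteq X^{\downarrow}_{t'}$, and $G[\mathcal{T}_t]-S$ arises from $G[\mathcal{T}_{t'}]-S$ by attaching $v$ to $N_G(v) \cap Z$. Hence: (a) $k$ is unchanged; (b) attaching $v$ only merges active blocks, so the inactive components, and with them $d$, are the same at $t$ and at $t'$; (c) each block of $P$ avoiding $v$ equals the matching block of $P'$, with the same cost; and (d) the unique block $p_v \ni v$ equals $\{v\} \cup \bigcup_{p' \in \mathcal{P}'} p'$ where $\mathcal{P}' = \setdef{p' \in P'}{p' \cap N_G(v) \neq \varnothing}$, and since $X^{\downarrow}_t \setminus X_t = X^{\downarrow}_{t'} \setminus X_{t'}$ its cost is $c(p_v) = \sum_{p' \in \mathcal{P}'} c'(p')$. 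These are precisely the relations defining membership in $\mathcal{S}^I_{\hat{\sigma}_t}$.

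For the first half in this case, from an exact solution $(\sigma_t = (Z,k,P,c),d)$ realized by $S$, facts (a)--(d) show that $S$ also realizes the exact solution $(\sigma_{t'},d)$ at $t'$ obtained by splitting $v$ off; the invariant at $t'$ then supplies a compatible $\hat{\sigma}_{t'} = (Z',k,P',\hat{c'}) \in \hat{\Sigma}_{t'}$ with $c'(p') \le \hat{c'}(p') \le (1+\delta)^{h_{t'}} c'(p')$ for all $p'$, and a value $\hat{d}$ with $d \le \hat{d} \le (1+\varepsilon)d$ and $\DP_{t'}[\hat{\sigma}_{t'},\hat{d}] = \true$. I then define $\hat{c}$ on $P$ by copying the values on the unmerged blocks and letting $\hat{c}(p_v)$ be the least element of $B$ that bounds $\sum_{p' \in \mathcal{P}'} \hat{c'}(p')$; by construction $\hat{\sigma}_t = (Z,k,P,\hat{c}) \in \hat{\Sigma}_t$ and $\hat{\sigma}_{t'} \in \mathcal{S}^I_{\hat{\sigma}_t}$, hence $\DP_t[\hat{\sigma}_t,\hat{d}] = \true$. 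The cost bounds on the unmerged blocks are immediate, and for $p_v$ one gets $\hat{c}(p_v) \ge \sum_{p' \in \mathcal{P}'} \hat{c'}(p') \ge \sum_{p' \in \mathcal{P}'} c'(p') = c(p_v)$ and $\hat{c}(p_v) \le (1+\delta) \sum_{p' \in \mathcal{P}'} \hat{c'}(p') \le (1+\delta)^{h_t} c(p_v)$, the extra factor coming from the ratio of consecutive elements of $B$ (valid because every nonzero element of $B$ is at least $1$, so the sum is either $0$ or at least $1$, and because $\sum_{p' \in \mathcal{P}'} \hat{c'}(p') \le (1+\delta)^{h} n \le (1+\varepsilon') n$ lies within the span of $B$; the zero-sum case forces $c(p_v) = \hat{c}(p_v) = 0$ and is trivial). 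Since $\hat{d}$ carries over, the first half holds at $t$. For the second half, given $\DP_t[\hat{\sigma}_t,\hat{d}] = \true$ one picks a witness $\hat{\sigma}_{t'} = (Z',k,P',\hat{c'}) \in \mathcal{S}^I_{\hat{\sigma}_t}$ with $\DP_{t'}[\hat{\sigma}_{t'},\hat{d}] = \true$ and applies the invariant at $t'$ to get an exact solution $(\sigma_{t'} = (Z',k,P',c'),d)$ realized by some $S' \subseteq X^{\downarrow}_{t'}$ with $d \le \hat{d}$ and $c'(p') \le \hat{c'}(p')$ for all $p'$ (the honest upper-bound form, see below). Taking $S := S'$ and invoking (a)--(d), $S$ realizes an exact solution $(\sigma_t = (Z,k,P,c),d)$ at $t$ whose partition is exactly the one prescribed by $\mathcal{S}^I_{\hat{\sigma}_t}$, so $\sigma_t$ is compatible with $\hat{\sigma}_t$; the inactive-pair count is unchanged so $d \le \hat{d}$; and on every block $c(p) \le \hat{c}(p)$ — immediate for unmerged blocks, while $c(p_v) = \sum_{p' \in \mathcal{P}'} c'(p') \le \sum_{p' \in \mathcal{P}'} \hat{c'}(p') \le \hat{c}(p_v)$ since $\hat{c}(p_v)$ bounds $\sum_{p' \in \mathcal{P}'} \hat{c'}(p')$.

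The main obstacle, and the reason for being careful with the rounding, is twofold. First, an introduce step rounds an already-rounded sum once more, so one must verify that the approximation ratio of the stored component sizes degrades by exactly one multiplicative $(1+\delta)$; this is charged against the increment $h_t = h_{t'}+1$, which is precisely why the tree decomposition is first balanced to height $h = \bO(\tw \log n)$, guaranteeing $(1+\delta)^{h} \le 1+\varepsilon'$. Second, and more subtly, the merged-block bookkeeping in the second half needs the \emph{honest} upper-bound form $c(p) \le \hat{c}(p)$ (and $d \le \hat{d}$) of the invariant rather than only $c(p) \le \lceil \hat{c}(p) \rceil$, because a sum of ceilings need not be dominated by the ceiling of a sum; I would therefore carry the slightly strengthened invariant with $c(p) \le \hat{c}(p)$ and $d \le \hat{d}$ throughout — these immediately yield the stated rounded bounds since $c(p)$ and $d$ are integers — and the single point that truly requires attention is checking that the Leaf base case and each of the Introduce, Forget, and Join rules preserves this strengthening. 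Once facts (a)--(d) are in place, everything else is routine.
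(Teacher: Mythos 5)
Your argument for the first direction is essentially the paper's: split on whether $v$ is kept, reduce to the child via the realizing set, and charge the one extra rounding of the merged block's size against the height increment, giving $\hat{c}(p_v) \le (1+\delta)\sum_{p'\in\mathcal{P}'}\hat{c}'(p') \le (1+\delta)^{h_t} c(p_v)$; your structural facts (a)--(d) and the check that the rounded sum stays within the range of $B$ are exactly the (partly implicit) ingredients of the paper's computation. Where you genuinely add something is the second direction, which the paper dismisses as ``straightforward by induction'': you correctly observe that the invariant as literally stated, with $c(p) \le \lceil\hat{c}(p)\rceil$ and $d \le \lceil\hat{d}\rceil$, is not self-sustaining through a merge, since a sum of ceilings can exceed the ceiling of the (re-rounded) sum, and you fix this by carrying the stronger bounds $c(p) \le \hat{c}(p)$ and $d \le \hat{d}$, which the DP rules do preserve (the stored value is always chosen as an element of $B$ dominating the relevant sum) and which imply the stated rounded form because $c(p)$ and $d$ are integers. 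Strictly speaking you are then proving ``strengthened invariant at $t'$ implies strengthened invariant at $t$'' rather than the literal statement, but since the strengthened invariant holds at leaves and implies the paper's, this closes the overall induction and is the natural repair of the glossed-over step; so the proposal is correct, same route for the substantive half, and more careful on the half the paper waves away.
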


    \begin{nestedproof}
        Suppose $t$ is an introduce node with child $t'$ such that $X_t = X_{t'} \cup \{ v \}$ for some $v \notin X_{t'}$ and $h_t = h_{t'}+1$.
        By induction, for an exact solution $(\sigma_{t'} = (Z',k',P',c'), d')$ we have calculated an approximate solution
        $(\hat{\sigma}_{t'} = (Z',k',P',\hat{c'}), \hat{d'})$
        with $\hat{c'}(p') \le (1+\delta)^{h_{t'}} \cdot c'(p')$ for all $p' \in P'$ and $\hat{d'} \le (1+\varepsilon) \cdot d'$.

        Let $(\sigma_t = (Z,k,P,c), d)$ denote an exact solution for node $t$, and $S \subseteq X^{\downarrow}_t$ a set that realizes it.
        If $v \notin Z$, it follows that $(\sigma_{t'},d)$ is an exact solution for node $t'$, where $\sigma_{t'} = \sigma_t = \sgn_{t'}(S)$,
        and by induction it easily follows that $(\hat{\sigma}_{t'}, \hat{d})$ is
        an approximate solution for node $t$ that respects the approximation guarantees.
        Alternatively, it holds that $v \in Z$, in which case it follows that there exists an exact solution $(\sigma_{t'}, d)$ in node $t'$
        realized by $S \setminus \{ v \}$.
        By induction, we have calculated the approximate solution $(\hat{\sigma}_{t'}, \hat{d})$,
        and let $\hat{\sigma}_t = (Z,k,P,\hat{c}) \in \hat{\Sigma}_t$ such that $\hat{\sigma}_{t'} \in \mathcal{S}^I_{\hat{\sigma}_t}$.
        We argue that $(\hat{\sigma}_t, \hat{d})$ is an approximate solution that respects the guarantees.
        To see this, it suffices to argue about the size of the active component that contains $v$.
        Let $p_v \in P$ with $v \in p_v$, and let $p'_1, \ldots, p'_w \in P'$ denote the blocks of $P'$ that contain vertices belonging to $N_G(v)$.
        It holds that
        \begin{align*}
            \hat{c}(p_v) &= (1+\delta)^{\lceil \log_{(1+\delta)} (\hat{c'}(p'_1) + \ldots + \hat{c'}(p'_w)) \rceil}\\
            &\le (1+\delta) \cdot (\hat{c'}(p'_1) + \ldots + \hat{c'}(p'_w))\\
            &\le (1+\delta) \cdot (1+\delta)^{h_{t'}} \cdot (c'(p'_1) + \ldots + c'(p'_w))\\
            &= (1+\delta)^{h_t} \cdot c(p_v),
        \end{align*}
        where $\log_{(1+\delta)} (\cdot)$ denotes the logarithm base $(1+\delta)$, and the second inequality is due to the induction hypothesis.

        As for the second direction of the invariant, it is straightforward to argue about its validity by induction.
    \end{nestedproof}

    \proofsubparagraph{Forget node.}
    Suppose $t$ is a forget node with child $t'$ such that $X_t = X_{t'} \setminus \{ v \}$ for some $v \in X_{t'}$.
    For all $\hat{\sigma}_t = (Z, k, P, \hat{c}) \in \hat{\Sigma}_t$ and $\hat{d} \in [0, (1+\varepsilon) \cdot \binom{n}{2}]$ we set
    \[
        \DP_t[\hat{\sigma}_t, \hat{d}] = \bigvee_{\hat{\sigma}_{t'} \in \mathcal{S}^F_{\hat{\sigma}_t}} \DP_{t'}[\hat{\sigma}_{t'}, \hat{d} - \hat{d'}(\hat{\sigma}_{t'})],
    \]
    where $\mathcal{S}^F_{\hat{\sigma}_t} \subseteq \hat{\Sigma}_{t'}$ such that
    $\hat{\sigma}_{t'} = (Z', k', P', \hat{c'}) \in \mathcal{S}^F_{\hat{\sigma}_t}$ if
    \begin{romanenumerate}
        \item $Z = Z' \setminus \{ v \}$,
        \item if $v \in Z'$ then $k = k'$, otherwise $k = k'+1$,
        \item $P = P'_{\downarrow Z}$,
        \item for all $p' \in P'$ such that $v \notin p'$, we have $\hat{c}(p') = \hat{c'}(p')$,

        \item if $\{ v \} \in P'$ then $\hat{d'}(\hat{\sigma}_{t'}) = \binom{\hat{c'}(\{ v \}) + 1}{2}$,
        otherwise $\hat{d'}(\hat{\sigma}_{t'}) = 0$ and, if $v \in Z'$, let $p'_v \in P'$ with $v \in p'_v$,
        in which case let $\hat{c}(p'_v \setminus \{ v \})$ be equal to the minimum element of $B$ that upper-bounds $\hat{c'}(p'_v) + 1$.
    \end{romanenumerate}
    Intuitively, if $v$ is the last vertex of an active component we accumulate the number of pairs of connected vertices in that component
    to $\hat{d}$, otherwise we increase the approximate size of said active component.

    \begin{lemma}\label{lemma:fpt-as:forget}
        Let $t$ be a forget node, with $t'$ denoting its child node.
        Assume that the invariant holds for node $t'$.
        Then it also holds for node $t$.
    \end{lemma}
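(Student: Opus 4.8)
The plan is to establish both halves of the invariant for the forget node $t$, assuming by induction that they hold for its unique child $t'$, where $X_t = X_{t'} \setminus \{v\}$ and $h_t = h_{t'}+1$. The basic observation is that $G[\mathcal{T}_t] = G[\mathcal{T}_{t'}]$ and $X^{\downarrow}_t = X^{\downarrow}_{t'}$, so a set $S \subseteq X^{\downarrow}_{t'}$ realizing a solution at $t'$ is literally a candidate realizer at $t$, and conversely; only the bookkeeping changes, since $v$ moves out of the bag. I would split into the three mutually exclusive cases for the status of $v$ relative to such an $S$ that are encoded by conditions (ii)--(v) of $\mathcal{S}^F_{\hat{\sigma}_t}$: (a) $v \in S$; (b) $v \notin S$ and $\{v\} \notin P'$, i.e.\ the component of $v$ still meets $X_t$; and (c) $v \notin S$ and $\{v\} \in P'$, i.e.\ $v$ is the last bag vertex of an otherwise inactive component. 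In each case I would check that $\mathcal{S}^F_{\hat{\sigma}_t}$ enumerates exactly the pseudo-signatures at $t'$ whose realizers yield $\hat{\sigma}_t$ at $t$, which together with the inductive hypothesis gives both directions.

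Cases (a) and (b) are essentially bookkeeping. In case (a), $v$ is already deleted, so the partition and all exact and rounded component sizes are unchanged and only $k$ grows by one because $v$ is no longer a bag vertex; since $\hat{d'}(\hat{\sigma}_{t'}) = 0$, the accumulated pair count is inherited verbatim, and both directions follow directly from the hypothesis at $t'$. In case (b), the block $p'_v \ni v$ of $P'$ becomes the block $p_v = p'_v \setminus \{v\}$ of $P$, its exact size grows by exactly one ($v$ becomes an internal non-bag vertex of the same component), and its rounded size is recomputed as the least element of $B$ that is at least $\hat{c'}(p'_v)+1$. The only substantive point is the multiplicative guarantee, which I would get by chaining $\hat{c}(p_v) \le (1+\delta)(\hat{c'}(p'_v)+1) \le (1+\delta)((1+\delta)^{h_{t'}} c'(p'_v)+1) \le (1+\delta)^{h_{t'}+1}(c'(p'_v)+1) = (1+\delta)^{h_t} c(p_v)$, where the first step is because rounding up into $B$ costs at most one factor $(1+\delta)$, the second is the inductive size bound, and the third uses $(1+\delta)^{h_{t'}} \ge 1$; the lower bound $c(p_v) \le \hat{c}(p_v)$ and its ceiling version are immediate, and the pair count is again inherited since no component turns inactive.

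Case (c) is the heart of the proof, as it is the only place where a component's pair count is committed permanently. There the component of $v$ has true size $c'(\{v\})+1$, contributing exactly $\binom{c'(\{v\})+1}{2}$ inactive pairs, while the algorithm charges $\hat{d'}(\hat{\sigma}_{t'}) = \binom{\hat{c'}(\{v\})+1}{2}$. For the first direction, I would combine the inductive bound $\hat{c'}(\{v\}) \le (1+\delta)^{h_{t'}} c'(\{v\})$ with $\hat{c'}(\{v\})+1 \le (1+\delta)^{h_{t'}}(c'(\{v\})+1)$ and the choice $\delta = \varepsilon'/(2h)$, which yields $(1+\delta)^{h_{t'}} \le (1+\delta)^{h} \le e^{\varepsilon'/2} \le 1+\varepsilon'$, and then invoke \Cref{lemma:fpt-as:varepsilon} with $c = c'(\{v\})+1$ to get $\binom{\hat{c'}(\{v\})+1}{2} \le (1+\varepsilon)\binom{c'(\{v\})+1}{2}$ (when $c'(\{v\}) = 0$ we have $\hat{c'}(\{v\}) = 0$ and both sides vanish). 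Adding this bound to the inductively guaranteed approximate count for the rest of the graph gives the required $(1+\varepsilon)$-estimate for the new $\hat{d}$, and $d \le \hat{d}$ follows by monotonicity. For the second direction one runs the same estimate in reverse, now through $c'(\{v\}) \le \lceil \hat{c'}(\{v\}) \rceil$, and I expect the one genuinely delicate point to be the ceiling bookkeeping: making the committed, possibly non-integral quantity $\binom{\hat{c'}(\{v\})+1}{2}$ interact correctly with the $\lceil \hat{d} \rceil$ in the invariant, which I would handle by working with the child's certified table index $\hat{d} - \hat{d'}(\hat{\sigma}_{t'})$ together with the identity $\lceil x \rceil + 1 = \lceil x+1 \rceil$. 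Everything else — that $Z$, $k$, and $P = P'_{\downarrow Z}$ come out as claimed, that blocks untouched by $v$ keep both their exact and their rounded sizes, and that $\mathcal{S}^F_{\hat{\sigma}_t}$ is exactly the right set — is routine from the definitions of signatures and of a forget node.
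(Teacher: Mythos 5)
Your proposal is correct and follows essentially the same route as the paper's proof: the same three-way case split on the status of the forgotten vertex $v$, the same rounding chain $\hat{c}(p_v)\le(1+\delta)(\hat{c'}(p'_v)+1)\le(1+\delta)^{h_t}c(p_v)$ for the growing component, and the same combination of $(1+\delta)^{h}\le 1+\varepsilon'$ with \Cref{lemma:fpt-as:varepsilon} (plus the separate singleton check) when a component becomes inactive, with the reverse direction left to routine induction just as in the paper. If anything, your explicit labeling of the cases by whether $v\in S$ is cleaner than the paper's write-up, which swaps the names of the $v\in S$ and $v\notin S$ cases while carrying out the same computations.
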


    \begin{nestedproof}
        Suppose $t$ is a forget node with child $t'$ such that $X_t = X_{t'} \setminus \{ v \}$ for some $v \in X_{t'}$ and $h_t = h_{t'}+1$.
        By induction, for an exact solution $(\sigma_{t'} = (Z',k',P',c'), d')$ we have calculated an approximate solution
        $(\hat{\sigma}_{t'} = (Z',k',P',\hat{c'}), \hat{d'})$
        with $\hat{c'}(p') \le (1+\delta)^{h_{t'}} \cdot c'(p')$ for all $p' \in P'$ and $\hat{d'} \le (1+\varepsilon) \cdot d'$.

        Let $(\sigma_t = (Z,k,P,c), d)$ denote an exact solution for node $t$, and $S \subseteq X^{\downarrow}_t$ a set that realizes it.
        We consider different cases.

        First, if $v \notin S$, it follows that $(\sigma_{t'} = (Z,k-1,P,c),d)$ is an exact solution for node $t'$,
        and by induction it easily follows that $(\hat{\sigma}_{t} = (Z,k,P,\hat{c}), \hat{d})$ is
        an approximate solution for node $t$ that respects the approximation guarantees.

        Alternatively, it holds that $v \in S$, in which case it follows that there exists an exact solution $(\sigma'_t = (Z',k',P',c'), d')$ in node $t'$
        realized by $S$.
        By induction, we have calculated the approximate solution $(\hat{\sigma}_{t'} = (Z',k',P',\hat{c'}), \hat{d'})$
        and let $\hat{\sigma}_t = (Z,k,P,\hat{c}) \in \hat{\Sigma}_t$ such that $\hat{\sigma}_{t'} \in \mathcal{S}^F_{\hat{\sigma}_t}$.
        Let $p'_v \in P'$ with $v \in p'_v$.

        Consider the case where $\{ v \} \notin P'$, that is, $v$ is not the last vertex of an active component.
        We argue that the approximate solution $(\hat{\sigma}_t, \hat{d'})$ satisfies the approximation guarantees.
        It suffices to argue about the approximation guarantee on the size of the active component $p'_v \setminus \{ v \}$;
        indeed, it holds that
        \begin{align*}
            \hat{c}(p'_v \setminus \{ v \}) &= (1+\delta)^{\lceil \log_{(1+\delta)} (\hat{c'}(p'_v) + 1) \rceil}\\
            &\le (1+\delta) \cdot (\hat{c'}(p'_v) + 1)\\
            &\le (1+\delta) \cdot (1+\delta)^{h_{t'}} \cdot (c'(p'_v) + 1)\\
            &= (1+\delta)^{h_t} \cdot c(p'_v \setminus \{v \}),
        \end{align*}
        where $\log_{(1+\delta)} (\cdot)$ denotes the logarithm base $(1+\delta)$, and the second inequality is due to the induction hypothesis.

        Lastly, consider the case where $\{ v \} \in P'$, that is, $v$ is the last vertex of an active component.
        We argue that the approximate solution $(\hat{\sigma}_t, \hat{d'} + \hat{d'}(\hat{\sigma}_{t'}))$ satisfies the approximation guarantees,
        where $\hat{d'}(\hat{\sigma}_{t'}) = \binom{\hat{c'}(\{ v \}) + 1}{2}$.
        In suffices to argue about the approximation guarantee on the number of pairs of connected vertices in the inactive components.
        Let $C$ denote the active component of $G[\mathcal{T}_{t'}] - S$ that contains $v$.
        Notice that if $|V(C)| = 1$, then $C = \{ v \}$ and $c'(\{ v \}) = \hat{c'}(\{ v \}) = 0$, thus $\hat{d'}(\hat{\sigma}_{t'}) = 0$.
        Alternatively, it holds that
        \begin{align*}
            \hat{c'}(p_v) + 1 &\le (1+\delta)^{h_{t'}} \cdot c'(p_v) + 1\\
            &\le (1+\delta)^{h_{t'}} \cdot (c'(p_v) + 1)\\
            &\le (1+\varepsilon') \cdot (c'(p_v) + 1)\\
            &= (1+\varepsilon') \cdot |V(C)|,
        \end{align*}
        where in the second inequality we use the fact that we have set $\delta$ to a value such that
        $(1+\delta)^h \le e^{\delta h} = e^{\varepsilon'/2} \le (1+\varepsilon')$
        for small enough $\varepsilon'$.%
        \footnote{It suffices to assume without loss of generality $\varepsilon' < 1/4$.}
        It follows that the induction hypothesis, along with \Cref{lemma:fpt-as:varepsilon} in the second case,
        implies that indeed $\hat{d'} + \hat{d'}(\hat{\sigma}_{t'})$ is a $(1+\varepsilon)$-approximation of $d$.

        As for the second direction of the invariant, it is straightforward to argue about its validity by induction.
    \end{nestedproof}

    \proofsubparagraph{Join node.}
    Finally, suppose that $t$ is a join node with children $t_1, t_2$ such that $X_t = X_{t_1} = X_{t_2}$.
    For all $\hat{\sigma}_t = (Z, k, P, \hat{c}) \in \hat{\Sigma}_t$ and $\hat{d} \in [0, (1+\varepsilon) \cdot \binom{n}{2}]$ we set
    \[
        \DP_t[\hat{\sigma}_t, \hat{d}] =
        \bigvee_{\genfrac{}{}{0pt}{}{\hat{d}_1 + \hat{d}_2 = \hat{d},}{(\hat{\sigma}_{t_1}, \hat{\sigma}_{t_2}) \in \mathcal{R}^U_{\hat{\sigma}_{t}}}}
        (   \DP_{t_1}[\hat{\sigma}_{t_1}, \hat{d}_1] \land
            \DP_{t_2}[\hat{\sigma}_{t_2}, \hat{d}_2] ),
    \]
    where $\mathcal{R}^U_{\hat{\sigma}_{t}} \subseteq \hat{\Sigma}_{t_1} \times \hat{\Sigma}_{t_2}$ such that
    $(\hat{\sigma}_{t_1}, \hat{\sigma}_{t_2}) = ((Z_1, k_1, P_1, \hat{c}_1), \, (Z_2, k_2, P_2, \hat{c}_2)) \in \mathcal{R}^U_{\hat{\sigma}_{t}}$ if
    \begin{romanenumerate}
        \item $Z = Z_1 = Z_2$,
        \item $k = k_1 + k_2$,
        \item $P = P_1 \sqcup P_2$,
        \item for every $p \in P$, $\hat{c}(p)$ is equal to the minimum element of $B$ that upper-bounds
        \[
            \sum_{p_1 \subseteq p \land p_1 \in P_1} \hat{c}_1(p_1) + \sum_{p_2 \subseteq p \land p_2 \in P_2} \hat{c}_2(p_2).
        \]
    \end{romanenumerate}

    \begin{lemma}\label{lemma:fpt-as:join}
        Let $t$ be a join node, with $t_1$ and $t_2$ denoting its children.
        Assume that the invariant holds for nodes $t_1$ and $t_2$.
        Then it also holds for node $t$.
    \end{lemma}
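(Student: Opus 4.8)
The plan is to follow the template of \Cref{lemma:fpt-as:introduce,lemma:fpt-as:forget} and establish both directions of the invariant for the join node $t$, whose children $t_1,t_2$ satisfy $X_t = X_{t_1} = X_{t_2}$ and $h_t = 1 + \max(h_{t_1},h_{t_2})$. Throughout we rely on the structural fact that $X^{\downarrow}_{t_1} \cap X^{\downarrow}_{t_2} = X_t$: any path in $G[\mathcal{T}_t]$ joining a vertex of $\mathcal{T}_{t_1}$ to one of $\mathcal{T}_{t_2}$ passes through $X_t$, so after deleting a set $S$ with $X_t \setminus Z \subseteq S$, no connected component of $G[\mathcal{T}_t] - S$ avoiding $X_t$ can meet both subtrees, and the reachability relation that $G[\mathcal{T}_t]-S$ induces on $Z$ is exactly the transitive closure of the union of the reachability relations induced on $Z$ by $G[\mathcal{T}_{t_1}]-(S\cap X^{\downarrow}_{t_1})$ and $G[\mathcal{T}_{t_2}]-(S\cap X^{\downarrow}_{t_2})$.

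For the first direction, let $(\sigma_t=(Z,k,P,c),d)$ be an exact solution realized by $S \subseteq X^{\downarrow}_t$, and put $S_i = S \cap X^{\downarrow}_{t_i}$ for $i \in \{1,2\}$; these agree on the bag, since $S_1 \cap X_t = S_2 \cap X_t = S \cap X_t$. First I would check that $S_i$ realizes an exact solution $(\sigma_{t_i}=(Z,k_i,P_i,c_i),d_i)$ in node $t_i$, with $k = k_1+k_2$, $P = P_1 \sqcup P_2$ (by the reachability observation above), $d = d_1 + d_2$ (the inactive components of $G[\mathcal{T}_t]-S$ are precisely those of $G[\mathcal{T}_{t_1}]-S_1$ together with those of $G[\mathcal{T}_{t_2}]-S_2$), and, for each $p \in P$, $c(p) = \sum_{p' \in P_1,\, p' \subseteq p} c_1(p') + \sum_{p'' \in P_2,\, p'' \subseteq p} c_2(p'')$, because the non-bag vertex sets being counted are pairwise disjoint. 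The induction hypothesis at $t_1,t_2$ then yields approximate solutions $(\hat\sigma_{t_i},\hat d_i)$ with $c_i(\cdot) \le \hat c_i(\cdot) \le (1+\delta)^{h_{t_i}} c_i(\cdot)$ and $d_i \le \hat d_i \le (1+\varepsilon)d_i$; combining them through the rule that defines $\mathcal{R}^U$ produces a compatible $\hat\sigma_t = (Z,k,P,\hat c)$ with $(\hat\sigma_{t_1},\hat\sigma_{t_2}) \in \mathcal{R}^U_{\hat\sigma_t}$, hence $\DP_t[\hat\sigma_t,\hat d_1+\hat d_2] = \true$. The approximation guarantees reduce to a short computation: $\hat d_1 + \hat d_2 \le (1+\varepsilon)(d_1+d_2) = (1+\varepsilon)d$, while for each $p \in P$,
\begin{align*}
    \hat c(p) &\le (1+\delta)\Bigl(\sum_{p' \in P_1,\, p' \subseteq p} \hat c_1(p') + \sum_{p'' \in P_2,\, p'' \subseteq p} \hat c_2(p'')\Bigr)\\
    &\le (1+\delta)^{1+\max(h_{t_1},h_{t_2})}\Bigl(\sum_{p' \in P_1,\, p' \subseteq p} c_1(p') + \sum_{p'' \in P_2,\, p'' \subseteq p} c_2(p'')\Bigr)\\
    &= (1+\delta)^{h_t} \cdot c(p),
\end{align*}
with the first inequality because rounding up to the next power of $1+\delta$ costs a factor at most $1+\delta$, and $c(p) \le \hat c(p)$ since $c_i(\cdot) \le \hat c_i(\cdot)$.

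For the second direction, if $\DP_t[\hat\sigma_t,\hat d] = \true$ then it was set by some $(\hat\sigma_{t_1},\hat\sigma_{t_2}) \in \mathcal{R}^U_{\hat\sigma_t}$ with $\hat d_1 + \hat d_2 = \hat d$ and $\DP_{t_i}[\hat\sigma_{t_i},\hat d_i] = \true$; the induction hypothesis gives exact solutions realized by sets $S_1,S_2$ whose signatures share the same $Z$, so $S_1,S_2$ agree on $X_t$ and $S = S_1 \cup S_2$ realizes an exact solution $(\sigma_t,d)$ in $t$ with $P = P_1 \sqcup P_2$ and $d = d_1+d_2$, and the required bounds $d \le \lceil \hat d\rceil$, $c(p) \le \lceil \hat c(p)\rceil$ are inherited from the children; this direction is routine.

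The main obstacle is the bookkeeping in the first direction — making precise that the optimal deletion set $S$ genuinely decomposes as $S_1 \cup S_2$ with each $S_i$ a legal realizer of an exact solution at $t_i$, that $P = P_1 \sqcup P_2$, and that both the inactive-pair counts and the non-bag sizes of merged active components add up — all of which ultimately rest on $X^{\downarrow}_{t_1} \cap X^{\downarrow}_{t_2} = X_t$. One should also double-check the degenerate cases: $c(p)=0$ (a component consisting solely of bag vertices, which the rule forces to $\hat c(p)=0$ since $0 \in B$), $c(p)=1$, and that $B$ is rich enough that the rounded value $\hat c(p)$ stays inside it whenever $c(p) \le n$.
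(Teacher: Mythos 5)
Your proof follows essentially the same route as the paper's: split the realizing set as $S \cap X^{\downarrow}_{t_1}$ and $S \cap X^{\downarrow}_{t_2}$, apply the induction hypothesis at the children, combine via $\mathcal{R}^U_{\hat{\sigma}_t}$, and bound $\hat{c}(p) \le (1+\delta)\cdot(1+\delta)^{h_t-1}\cdot c(p)$ and $\hat{d}_1+\hat{d}_2 \le (1+\varepsilon)d$, treating the second direction as routine. The only difference is that you spell out the additivity bookkeeping ($P = P_1 \sqcup P_2$, $d = d_1+d_2$, non-bag sizes adding up via $X^{\downarrow}_{t_1} \cap X^{\downarrow}_{t_2} = X_t$) that the paper leaves implicit, which is fine.
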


    \begin{nestedproof}
        Suppose that $t$ is a join node with children $t_1, t_2$ such that $X_t = X_{t_1} = X_{t_2}$ and $h_{t_1}, h_{t_2} \le h_t - 1$.

        Let $(\sigma_t = (Z,k,P,c), d)$ denote an exact solution for node $t$ and $S \subseteq X^{\downarrow}_t$ a set that realizes it.
        Let for $j \in \{ 1,2 \}$, $\sgn_{t_j} (S) = \sigma_j = (Z,k_j,P_j,c_j)$,
        and $(\sigma_j, d_j)$ denote the exact solution at node $t_j$ that is realized by $S \cap X^{\downarrow}_{t_j}$.
        By induction, we have calculated, for $j \in \{ 1,2 \}$, approximate solutions $(\hat{\sigma}_{t_1} = (Z,k_j,P_j,\hat{c}_j), \hat{d}_j)$
        with $\hat{c}_j(p) \le (1+\delta)^{h_{t_j}} \cdot c_j(p)$ for all $p \in P_j$ and $\hat{d}_j \le (1+\varepsilon) \cdot d_j$.
        Consider the approximate solution $(\hat{\sigma}_t, \hat{d}_t)$ where
        $(\hat{\sigma}_1, \hat{\sigma}_2) \in \mathcal{R}^U_{\hat{\sigma}_t}$ and $\hat{d} = \hat{d}_1 + \hat{d}_2$.
        We argue that it respects the approximation guarantees.
        Indeed, it holds that $\hat{d}_1 + \hat{d}_2 \le (1+\varepsilon) \cdot (d_1+d_2) = (1+\varepsilon) \cdot d$.
        Moreover, for $p \in P$ it holds that
        \begin{align*}
            \hat{c}(p) &= (1+\delta)^{\big\lceil \log_{(1+\delta)}
                \big( \sum_{p_1 \subseteq p \land p_1 \in P_1} \hat{c}_1(p_1) + \sum_{p_2 \subseteq p \land p_2 \in P_2} \hat{c}_2(p_2) \big) \big\rceil}\\
            &\le (1+\delta) \cdot \Big( \sum_{p_1 \subseteq p \land p_1 \in P_1} \hat{c}_1(p_1) + \sum_{p_2 \subseteq p \land p_2 \in P_2} \hat{c}_2(p_2) \Big)\\
            &\le (1+\delta) \cdot (1+\delta)^{h_t - 1} \cdot
                \Big( \sum_{p_1 \subseteq p \land p_1 \in P_1} c_1(p_1) + \sum_{p_2 \subseteq p \land p_2 \in P_2} c_2(p_2) \Big)\\
            &= (1+\delta)^{h_t} \cdot c(p),
        \end{align*}
        where $\log_{(1+\delta)} (\cdot)$ denotes the logarithm base $(1+\delta)$, and the second inequality is due to the induction hypothesis.

        As for the second direction of the invariant, it is straightforward to argue about its validity by induction.
    \end{nestedproof}

    \proofsubparagraph{Wrapping up.}
    The correctness of the algorithm follows by \Cref{lemma:fpt-as:introduce,lemma:fpt-as:forget,lemma:fpt-as:join}.
    Using back-tracking we can compute a set that realizes a given approximate solution with polynomial overhead.
    It remains to argue about the running time.
    Notice that $|B| = \bO \big( \frac{\log n}{\log (1+\delta)} \big) =
    \bO \big( \frac{\log n}{\delta} \big) =
    \bO \big( \frac{\tw \log^2 n}{\varepsilon'} \big)$,
    where we used the fact that $e^{\delta/2} \le 1+\delta$ for all $\delta < 1/2$.
    It follows that the running time of the presented algorithm is $2^{\bO(\tw)} \tw^{\bO(\tw)} |B|^{\bO(\tw)} n^{\bO(1)}
    = (\log n / \varepsilon)^{\bO(\tw)} \tw^{\bO(\tw)} n^{\bO(1)}$.
    To achieve the running time bound of $(\tw / \varepsilon)^{\bO(\tw)} n^{\bO(1)}$ we use a well-known win/win argument:
    if $\tw \le \sqrt{\log n}$ then $(\log n)^{\bO(\tw)} =
    (\log n)^{\bO(\sqrt{\log n})} =
    n^{o(1)}$;
    alternatively, $\log n \le \tw^2$ thus $(\log n)^{\bO(\tw)} = \tw^{\bO(\tw)}$.
\end{proof}

\begin{proofsketch}
    Here we describe the main idea behind our algorithm.
    On a high level, we aim to develop a DP which, while traversing the tree decomposition,
    keeps track of the sizes of any \emph{active} components (those whose vertices intersect the bag),
    while for the rest of the components (i.e., the \emph{inactive} ones)
    there exists a variable on which we account for their number of pairs of connected vertices.
    To this end, assuming that the exact size of an active component is $c$, our DP stores a value $\hat{c} \le (1+\varepsilon') \cdot c$
    where $\varepsilon'$ is such that $\binom{\hat{c}}{2} \le (1+\varepsilon) \cdot \binom{c}{2}$.
    Notice that for this to hold, $c=1$ implies that $\hat{c}=1$.
    Consequently, it suffices to present a dynamic program which correctly stores the size of any singleton active component,
    while for the rest of active components it allows for $(1 + \varepsilon')$-approximate values on their sizes.
    In that case, the number of pairs of connected vertices accounted for every connected component is a $(1+\varepsilon)$-approximation,
    and since we sum over those, the final value has a $(1+\varepsilon)$-approximation ratio as well.
    Our algorithm is thus a DP that does exactly as required.
\end{proofsketch}

\section{Kernelization}\label{sec:kernel}

Given that, as we have shown in \cref{thm:CNC:vi}, {\CNC} is FPT parameterized by the vertex integrity of the input graph,
a natural question arising is whether one can develop a polynomial kernel under this parameterization.
In this section we prove that this cannot be the case under standard assumptions,
even for the much more restricted parameterization by vertex cover number.

\begin{theorem}\label{thm:kernel}
    {\CNC} does not admit a polynomial kernel when parameterized by the vertex cover number
    of the input graph, unless $\mathsf{NP} \subseteq \mathsf{coNP} \slash \mathsf{poly}$.
\end{theorem}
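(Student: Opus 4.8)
The plan is to prove this via a cross-composition (or equivalently, an OR-composition combined with the standard machinery of Bodlaender, Jansen, and Kratsch). We want to show that \CNC\ parameterized by vertex cover number does not admit a polynomial kernel unless $\mathsf{NP} \subseteq \mathsf{coNP}/\mathsf{poly}$. The natural source problem is \CNC\ itself (or \textsc{Vertex Cover}-like problems), but since we need instances with small vertex cover that encode many independent instances, I would cross-compose from an NP-hard problem whose instances can be ``packed'' into a single graph whose vertex cover grows only polylogarithmically in the number of instances. A clean choice is to cross-compose from \CNC\ restricted to instances that already have bounded vertex cover, or — following the approach typical for such lower bounds — from a problem like \textsc{Clique} or a suitably chosen colored variant; however, the cleanest route here is likely an OR-cross-composition where the polynomial equivalence relation groups instances by the number of vertices $n$, the value $k$, and the value $x$, so that all $t$ instances in a batch look ``the same size.''

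\textbf{Key steps.} First, I would fix an NP-hard problem $\Pi$ — a good candidate is \CNC\ on instances where we additionally promise the graph has a small vertex cover, or alternatively \textsc{Multicolored Clique}, which is NP-hard and whose instances are naturally ``uniform.'' Second, given $t$ instances $\mathcal{I}_1, \ldots, \mathcal{I}_t$ of $\Pi$, all sharing the same relevant parameters under the equivalence relation, I would build a single graph $G$ together with budgets $k', x'$ such that $(G, k', x')$ is a yes-instance of \CNC\ iff at least one $\mathcal{I}_i$ is a yes-instance. The construction should introduce a small ``selection gadget'' — a set of $\mathcal{O}(\log t)$ vertices (or $\mathcal{O}(\log t)$ plus a constant-size core) that form part of the vertex cover and whose deletion pattern selects which of the $t$ instances is ``activated'' — together with $t$ copies of the instance graphs sharing this selector. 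The budget $k'$ is set so that the solver must spend exactly enough of its budget to ``deactivate'' all but one copy (by deleting the shared selector vertices appropriately), leaving it with exactly the budget of the original instance to solve the chosen copy; the bound $x'$ on connected pairs is calibrated so that the inactive copies contribute a fixed, predictable number of pairs and only a genuine solution in some copy keeps the total at or below $x'$. Third, I would verify that the vertex cover number of $G$ is bounded by a polynomial in $\max_i |\mathcal{I}_i| + \log t$ — this is the crux: the individual instance graphs must be attached in a way that reuses a shared small vertex cover, so the cover does not scale with $t$. Fourth, I would invoke the standard cross-composition theorem (Bodlaender–Jansen–Kratsch) to conclude that a polynomial kernel in the vertex cover number would yield a polynomial compression of an NP-hard OR, hence $\mathsf{NP} \subseteq \mathsf{coNP}/\mathsf{poly}$.

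\textbf{Main obstacle.} The delicate part is the interplay between the two budgets $k'$ and $x'$ in the presence of many copies that all share vertices with the selector gadget. Because \CNC\ counts \emph{all} connected pairs globally, the inactive copies cannot simply be ``ignored'' — their contribution to $\pairs(G-S)$ must be exactly accountable regardless of how the solver spends its budget, so I must design the gadget so that once the selector pattern fixes the active copy, every inactive copy collapses into components of a fixed total size (e.g., by having the selector deletion disconnect each inactive copy into singletons or into a predictable small-size pattern, akin to the weight-balancing trick in \Cref{thm:CNC:fes}). Getting the arithmetic to work out so that ``solve one copy'' is the unique way to reach the threshold $x'$, while keeping the vertex cover small and the reduction polynomial-time, is where the real work lies. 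A secondary subtlety is ensuring the selector itself lies inside a small vertex cover; this typically forces the selector to be a small clique or a structured bipartite-like gadget rather than an independent set, and one must check that deletions within the selector do not create uncountable side effects. Once the gadget is correct, the composition bound and the appeal to the framework are routine.
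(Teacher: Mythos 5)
Your plan is only a skeleton of a cross-composition, and the step you yourself flag as ``where the real work lies'' is precisely the step that is missing and that does not work as described. If you attach $t$ copies of arbitrary NP-hard instances (each containing edges) to a shared selector of $\mathcal{O}(\log t)$ vertices, the vertex cover of the composed graph is at least $t$ times the vertex cover of a single copy, which is not bounded by $\mathrm{poly}(\max_i|\mathcal{I}_i|+\log t)$; so the copies cannot appear as ordinary induced subgraphs at all. The only way to keep the cover small is to re-encode the instances so that everything that scales with $t$ (or with $n$) is an \emph{independent set} whose edges all go into a small shared core --- and you never specify such an encoding, nor the accompanying budget/threshold arithmetic that you correctly identify as delicate. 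As it stands, the proposal asserts that a suitable gadget exists rather than constructing one, so it does not constitute a proof.

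The paper takes a different and more economical route: instead of a cross-composition, it gives a polynomial parameter transformation from \kMC{} parameterized by $k\log n$, which is already known to admit no polynomial kernel unless $\mathsf{NP}\subseteq\mathsf{coNP}/\mathsf{poly}$ (Hermelin et al.), so all the OR-composition machinery is inherited from that result. The missing mechanism in your plan is exactly the bit-encoding used there: each vertex of a color class is identified with a bit-string of length $\log n$, realized by a core clique $K$ of $2k\log n$ vertices plus a clique $C$ of $k\log n+1$ vertices, so the vertex cover is $\mathcal{O}(k\log n)$; every edge of the source graph becomes a single \emph{adjacency vertex} adjacent only to the core vertices encoding its endpoints, and padding independent sets of size $|E|+1$ attached to pairs of core vertices make the count of isolated vertices (and hence $\pairs(G'-S)$) force the deletion set to pick exactly one bit value per color/position, i.e., one vertex per color class, with the isolated adjacency vertices certifying the clique edges. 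If you want to salvage your cross-composition approach you would essentially have to reinvent this encoding (with the instance index absorbed into the bit-strings), at which point reducing from the known \kMC{} lower bound is simpler; in either case, the concrete gadget and the counting argument are the content of the proof, and they are absent from your proposal.
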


\begin{proof}
    We present a polynomial parameter transformation reducing from \kMC.
    In the latter, we are given a graph $G = (V,E)$ and a partition of $V$ into $k$ independent sets (also called color classes)
    $V_1, \ldots, V_k$, each of size $n$,
    and we are asked to determine whether $G$ contains a $k$-clique.
    It is known that {\kMC} parameterized by $k \log n$ does not admit a polynomial kernel
    unless $\mathsf{NP} \subseteq \mathsf{coNP} \slash \mathsf{poly}$~\cite{algorithmica/HermelinKSWW15}.

    \proofsubparagraph{Construction.}
    Consider an instance $(G,k)$ of \kMC, where $V_i = \setdef{v^i_j}{j \in [n]}$ for all $i \in [k]$.
    For each color class $V_i$, we assign to each of its vertices a unique (among the vertices of $V_i$) bit-string of length $\log n$.
    For all $w \in [\log n]$,
    let $b_w \colon V \to \{ 0,1 \}$ denote the $w$-th bit in the bit-string of $v^i_j \in V_i$. 
    Construct the graph $G'$ as follows.
    \begin{itemize}
        \item We first introduce a clique on vertex set
        $K = \setdef{g^{i,w}_z}{i \in [k], \, w \in [\log n], \, z \in \{0,1\}}$,
        which we refer to as \emph{core vertices}.
        We say that a core vertex $g^{i,w}_z$ \emph{corresponds} to color class $i$,
        with $K_i$ being composed of all core vertices corresponding to color class $i$, for $i \in [k]$.
        Furthermore, we say that a vertex $v \in V_i$ of $G$ is \emph{encoded} by the vertices $\setdef{g^{i,w}_{b_w(v)}}{w \in [\log n]}$.

        \item Introduce a clique on the vertex set $C = \setdef{c_i}{i \in [k \log n + 1]}$,
        and add edges such that every vertex of $C$ is adjacent to all the core vertices.

        \item For every edge $e = \{ u_1, u_2 \} \in E(G)$,
        where $u_1 \in V_{i_1}$ and $u_2 \in V_{i_2}$,
        introduce an \emph{adjacency vertex} $h_e$ which is incident with all vertices
        encoding its endpoints, that is, with vertices $g^{i_1, w}_{b_w(u_1)}$ and $g^{i_2, w}_{b_w(u_2)}$
        for all $w \in [\log n]$.
        Let $H_{i_1,i_2}$ denote the set of all adjacency vertices due to edges in $G$ between vertices in
        $V_{i_1}$ and $V_{i_2}$. 

        \item For all $\{ i_1, i_2 \} \in \binom{[k]}{2}$ (that is, for all pairs of different color classes),
        and for all $w_1, w_2 \in [\log n]$ and $z_1, z_2 \in \{0,1\}$,
        introduce an independent set of size $A=|E|+1$,
        each vertex of which is incident with $g^{i_1,w_1}_{z_1}$ and $g^{i_2,w_2}_{z_2}$.
        We refer to the vertices added in this step of the construction as \emph{dummy vertices}.
    \end{itemize}
    This concludes the construction of the graph $G'$.
    Notice that $G' - (K \cup C)$ is an independent set,
    consequently the vertex cover number of $G'$ is at most $3k \log n + 1$.
    We will show that $(G', k', x)$ is an equivalent instance of \CNC,
    where $k' = k \log n$ and $x = \binom{|V(G')| - k' - \binom{k}{2} - A\binom{k}{2} \log^2 n}{2}$.

    For the forward direction, consider a function $s \colon [k] \to [n]$ such that
    $\mathcal{V} = \setdef{v^i_{s(i)}}{i \in [k]}$ is a $k$-clique in $G$.
    In that case, let $S = \setdef{g^{i,w}_{b_w(v^i_{s(i)})}}{i \in [k], \, w \in [\log n]}$
    be a set of size $k'$.
    We will prove that $G'-S$ has at most $x$ pairs of connected vertices.
    First, notice that for every pair of core vertices belonging to $S$ that correspond to different color classes,
    their removal results in an independent set of dummy vertices of size $A$ in $G'-S$.
    Since there are $\binom{k}{2} \log^2 n$ such pairs, $G'-S$ contains $A \binom{k}{2} \log^2 n$ isolated dummy vertices.
    Furthermore, we argue that the adjacency vertex of any edge in $G[\mathcal{V}]$ is isolated in $G'-S$.
    To see this, consider the adjacency vertex $h_e \in V(G')$ where $e = \{ v^{i_1}_{s(i_1)}, v^{i_2}_{s(i_2)} \} \in E(G)$.
    It holds that
    $N_{G'} (h_e) = \setdef{g^{{i_1},w}_{b_w (v^{i_1}_{s(i_1)})}, \, g^{{i_2},w}_{b_w (v^{i_2}_{s(i_2)})}}{w \in [\log n]} \subseteq S$,
    thus $h_e$ is indeed an isolated vertex in $G'-S$.
    Since $\mathcal{V}$ induces a $k$-clique, there are $\binom{k}{2}$ such isolated adjacency vertices.
    Finally, $S$ itself is of size $k'$.
    Consequently, the number of pairs of connected vertices in $G'-S$ is at most the number of pairs of non-isolated vertices in the graph,
    which is at most $\binom{|V(G')| - A \binom{k}{2} \log^2 n - \binom{k}{2} - k'}{2} = x$.

    For the opposite direction, let $S \subseteq V(G')$ be of size at most $k'$ such that $G'-S$ has at most $x$ pairs of connected vertices.
    Notice that $|C| > k'$, consequently there exists a vertex $c \in C \setminus S$.
    Since $N_G(c) \supseteq K$ and $C$ is a clique, it follows that all vertices of $(K \cup C) \setminus S$ are in the same connected component of $G'-S$;
    let this component be denoted by $\BigComponent \in \cc(G'-S)$.
    Furthermore, since for any vertex $v \in V(G') \setminus (K \cup C)$ it holds that $N_{G}(v) \subseteq K$,
    any such vertex either belongs to $\BigComponent$ or is isolated in $G'-S$.

    \begin{claim}\label{claim:kernel:structure}
        We have $|S \cap K_i| = \log n$, for all $i \in [k]$.
    \end{claim}


    \begin{claimproof}
        To prove the claim we argue about the number of isolated dummy vertices in $G'-S$.
        First, notice that
        $(n-\varepsilon)(n+\varepsilon) = n^2 - \varepsilon^2 < n^2$ for all $\varepsilon>0$.
        Consequently, the number of isolated dummy vertices in $G'-S$ is maximized when
        $|S \cap K_i| = \log n$ for all $i \in [k]$, in which case the number of isolated dummy vertices is $A \binom{k}{2} \log^2 n$.

        Assume that the claim is false. In that case, the number of isolated dummy vertices in $G'-S$
        is at most $A \parens*{\binom{k}{2} \log^2 n - 1}$;
        any other dummy vertex in $G'-S$ must belong to $\BigComponent$, which also contains all vertices of $(K \cup C) \setminus S$.
        Consequently, the size of $\BigComponent \in \cc(G'-S)$ is at least $|V(G')| - |E| - k' -  A \parens*{\binom{k}{2} \log^2 n - 1}$.
        In that case
        \begin{align*}
            |V(G')| - |E| - k' -  A \parens*{\binom{k}{2} \log^2 n - 1}
            &> |V(G')| - A \binom{k}{2} \log^2 n - \binom{k}{2} - k' \iff\\
            - |E| + A &> - \binom{k}{2} \impliedby\\
            A &= |E|+1,
        \end{align*}
        thereby yielding a contradiction as $G'-S$ has more than $x$ pairs of connected vertices.
    \end{claimproof}


    Recall that $G'-S$ is composed of a connected component $\BigComponent$ containing all non-isolated vertices, as well as some isolated vertices.
    By \cref{claim:kernel:structure} it holds that $|S| = k'$, thus for $G'-S$ to have at most $x$ pairs of connected vertices the
    number of its isolated vertices must be at least $\binom{k}{2} + A\binom{k}{2} \log^2 n$.
    Due to \cref{claim:kernel:structure} it follows that it has exactly $A\binom{k}{2} \log^2 n$ isolated dummy vertices,
    while none of the remaining vertices of $K \cup C$ can be isolated.
    Consequently, the deletion of $S$ isolates at least $\binom{k}{2}$ adjacency vertices.

    We argue that no two isolated adjacency vertices $h_{e_1},h_{e_2}$ belong to the same set $H_{i_1,i_2}$.
    Assume that this is the case, and let $h_{e_1},h_{e_2} \in H_{i_1,i_2}$.
    For $u \in \{ h_{e_1},h_{e_2} \}$ it holds that $|N_{G'}(u) \cap K_{i_1}| = |N_{G'}(u) \cap K_{i_2}| = \log n$,
    while $N_{G'}(h_{e_1}) \neq N_{G'}(h_{e_2})$.
    Due to \cref{claim:kernel:structure}, this leads to a contradiction.
    Consequently, there is a exactly one isolated adjacency vertex in $G'-S$ belonging to $H_{i_1,i_2}$ for all $\{i_1,i_2\} \in \binom{[k]}{2}$.

    Now consider one such isolated adjacency vertex $h_e \in H_{i_1,i_2}$.
    Notice that $|N_{G'}(h_e) \cap \{ g^{p,w}_{0}, g^{p,w}_{1} \}| = 1$ for all $w \in [\log n]$ and $p \in \{i_1,i_2\}$.
    Since $S \supseteq N_{G'}(h_e)$, and this holds for all isolated adjacency vertices,
    due to \cref{claim:kernel:structure}
    it follows that $|S \cap \{ g^{i,w}_{0}, g^{i,w}_{1} \}| = 1$ for all $w \in [\log n]$ and $i \in [k]$.

    Let $s \colon [k] \to [n]$ such that $v^i_{s(i)}$ is encoded by the vertices $S \cap K_i$, for all $i \in [k]$.
    We claim that $\mathcal{V} = \setdef{v^i_{s(i)}}{i \in [k]}$ induces a $k$-clique in $G$.
    Consider $\{i_1,i_2\} \in \binom{[k]}{2}$.
    Notice that there exists an isolated adjacency vertex $h_e$ in $G'-S$ belonging to $H_{i_1,i_2}$;
    the neighborhood of $h_e$ is exactly the vertices encoding its two endpoints in $G$,
    thus $\{v^{i_1}_{s(i_1)}, v^{i_2}_{s(i_2)}\} \in E(G)$.
\end{proof}

%




%

\section{Conclusion}

In this paper we have thoroughly studied {\CNC} under the perspective of parameterized complexity,
and presented a plethora of results, mostly taking into account the structure of the input graph.
As a direction of future work, it is currently unknown whether the problem is FPT when parameterized by
$k+\td$, where $\td$ denotes the tree-depth of the input graph, or when parameterized by the cluster vertex deletion number
of the input graph. Another interesting direction would be to (dis)prove the optimality of the $n^{\bO(\tw)}$ or the
$n^{\bO(2^\cw)}$ algorithm; especially for the latter, we note that there are only a handful of
natural problems for which such a running time is known to be optimal under the ETH~\cite{algorithmica/AboulkerBKS23,talg/FominGLSZ19,mst/JaffkeLL24}.
Given the similarity of the construction of \cref{thm:CNC:fes} and the algorithm of \cref{thm:CNC:ml} with
the corresponding results for {\VI}~\cite{mfcs/HanakaLVY24}, a natural question is whether we can
obtain similar results for other optimization functions that take into account the sizes of the
components of the graph remaining after the vertex deletions. Some of our results seem to be easily adaptable
to such a more general setting (e.g., for general separately convex functions),
however we do not know whether that is the case for the W[1]-hardness as well.

\bibliography{bibliography}

\end{document}